\newtheorem{thm}{Theorem}[section]
\newtheorem{lemma}{Lemma}[section]
\theoremstyle{definition}
\newtheorem{remark}{Remark}[section]
\newtheorem{defin}{Definition}[section]
\begin{document}
\begin{frontmatter}

\title{Pricing the European call option in the model with stochastic
volatility driven by
Ornstein--Uhlenbeck process. Exact formulas}

\author{\inits{S.}\fnm{Sergii}\snm{Kuchuk-Iatsenko}\corref{cor1}}\email
{kuchuk.iatsenko@gmail.com}
\cortext[cor1]{Corresponding author.}
\author{\inits{Yu.}\fnm{Yuliya}\snm{Mishura}}\email{myus@univ.kiev.ua}
\address{Taras Shevchenko National University of Kyiv\\
Volodymyrska str. 64, 01601
Kyiv, Ukraine}

\markboth{S.~Kuchuk-Iatsenko, Yu.~Mishura}{Evaluation of the price of
European call option}

\begin{abstract}
We consider the Black--Scholes model of financial market modified to
capture the stochastic nature of volatility observed at real financial
markets. For volatility driven by the Ornstein--Uhlenbeck process, we
establish the existence of equivalent martingale measure in the market
model. The option is priced with respect to the minimal martingale
measure for the case of uncorrelated processes of volatility and asset
price, and an analytic expression for the price of European call
option is derived. We use the inverse Fourier transform of a
characteristic function and the Gaussian property of the
Ornstein--Uhlenbeck process.
\end{abstract}

\begin{keyword}
Financial markets\sep
stochastic volatility\sep
Ornstein--Uhlenbeck process\sep
option pricing
\MSC[2010] 91B24\sep91B25\sep91G20
\end{keyword}


\received{29 July 2015}
\revised{13 September 2015}
\accepted{14 September 2015}
\publishedonline{25 September 2015}
\end{frontmatter}

\section{Introduction}\label{sec1}

One of the promising directions of enhancement of the classical
Black--Scholes model is construction and research of diffusion models
with volatility of risky asset governed by a stochastic process.
Empirical studies \cite{paper7,paper13} evidence in favor of
the fact that the classical model with constant volatility is unable to
capture important features of volatility observed in real financial
markets. This drawback of the Black--Scholes model has been widely
investigated and to some extent eliminated by the extension of the
theory in three directions: models with time-dependent deterministic
volatility, models with state-dependent volatility, and models with
stochastic volatility. The first and second of these categories may be
viewed as intermediate between the classical model and third category,
although equipping the market with certain constraints (the most
essential is the limiting time period under consideration) allow less
complex models to produce results of acceptable precision.

Despite recent popularity of the stochastic volatility modification of
the Black--Scholes theory, the range of models under consideration is
quite narrow. One of the first models of such a type is presented in
\cite{paper8}, where the authors assume the volatility of the price of
risky asset to be governed by the square root of the geometric Brownian
motion. An expression for the price of European call option is derived
under the following assumption: the volatility process is driven by a
Brownian motion independent of the Brownian motion governing the price
of risky asset. In \cite{paper9}, the authors choose the
Ornstein-Uhlenbeck (OU) process to drive the volatility. The OU process
is mean-reverting, and there is a strong evidence that the volatility
in real financial markets has such a feature \cite{paper1,book1}. Under this assumption, the authors of \cite{paper9} describe
the distribution of the price of risky asset and apply it to derive an
estimate of the price of European call option. As an alternative, there
is an option to choose the Cox--Ingersoll--Ross process to govern the
volatility process \cite{paper1,book1}. It is worth
mentioning that although all cited works contain some significant
results, they rely upon simplified models of real-world volatility
process (e.g., ignoring the mean-reversion property). We remark that
there is a vast amount of further investigations that consider more
sophisticated and thus more realistic models. An extensive overview of
these results is given in \cite{paper16}.

Nonnegativity is another desirable feature of the process modeling
volatility. One of possible choices is to use the exponential function
of the OU process (see \cite{paper11,paper10}, and references therein).

Questions of existence of equivalent (local) martingale measures are
investigated in different frameworks and different generality in \cite
{book1,paper2,book3,paper14}. Often, after
specifying the model, the authors state that a risk-neutral measure
exists and continue investigation in the risk-neutral world without
defining the measure.

A significant part of works (including aforementioned) use the Fourier
transform to derive an analytical representation of the price of
European call option. A great deal of information about developments in
application of the Fourier transform to option pricing problems can be
found in \cite{paper15}.

Our work investigates the market defined by a diffusion model with
stochastic volatility being an arbitrary function governed by the
Ornstein--Uhlenbeck process. Under general setting and quite mild
assumptions, we prove that the market satisfies two distinct
no-arbitrage properties for different classes of trading strategies.
For the special case of uncorrelated Wiener processes, we derive an
analytical expression for the price of European call option.

This paper is structured as follows: in Section~\ref{sec2}, we define a general
model. In Section~\ref{sec3}, we present definitions and preliminary results
necessary for further analysis. In Section~\ref{sec4}, we investigate matters of
existence of equivalent (local) martingale measures and arbitrage
properties of the general model. In Section~\ref{sec5}, we define a particular
case of the general model and raise the problem of pricing European
call option. Section~\ref{sec6} covers the derivation of an~analytical
expression for the option price.

\section{Diffusion model with stochastic volatility governed by
Ornstein--Uhlenbeck process}\label{sec2}

Let $\{\varOmega, \mathcal{F}, \mathbf{F}=\{\mathcal{F}_t^{(B,W)}, t\geq
0\}, \mathbb{P}\}$ be a complete probability space with filtration
generated by correlated Wiener processes $\{B_t$, $W_t$, $0 \leq t \leq
T\}$. We consider the model of the market where one risky asset is
traded, its price evolves according to the geometric Brownian motion $\{
S_t,\;0 \leq t \leq T\} $, and its volatility is driven by a stochastic
process. More precisely, the market is described by the pair of
stochastic differential equations

\begin{equation}
\label{Model0} dS_t = \mu S_tdt+\sigma(Y_t)S_tdB_t,
\end{equation}
%
\begin{equation}
\label{Model1} dY_t = -\alpha Y_tdt+kdW_t.
\end{equation}

Denote by $S_0=S $ and $Y_0=Y$ the deterministic initial values of the
processes specified by Eqs.~\eqref{Model0}--\eqref{Model1}.

In Sections~\ref{sec2}--\ref{sec4}, we impose the following assumptions:

\begin{itemize}

\item[(A1)] The Wiener processes $B$ and $W$ are correlated with
correlation coefficient $\rho\in[-1;1] $, that is, $dB_tdW_t=\rho dt$;

\item[(A2)] the volatility function $\sigma: \mathbb{R}\rightarrow
\mathbb{R}_+$ is measurable, bounded away from zero by a constant $c$,
that is,
\[
\sigma(x) \geq c > 0,\quad  x \in\mathds{R},
\]
and satisfies the conditions $\int_0^T\sigma^2(Y_t)dt<\infty$ a.s.;

\item[(A3)] the coefficients $\alpha$, $\mu$, and $k$ are positive.
\end{itemize}

For example, the conditions mentioned in assumption (A2) are satisfied
for a~measurable function $\sigma(x)$ such that $c \leq\sigma^2(x)
\leq C$ for $x\in\mathbb{R}$ and some constants $0<c<C$. Moreover,
given the square integrability of $\sigma(Y_s)$, the solution of the
differential equation \eqref{Model0} is given by
%
\begin{equation}
S_t=S_0\exp{ \Biggl(\mu t - \frac{1}{2}\int
^t_0\sigma^2(Y_s)ds+\int
^t_0\sigma(Y_s)dB_s
\Biggr)},
\end{equation}
which yields that $S_t$ is continuous. Hence, the product $\sigma
(Y_s)S_t$ is square integrable: $\int_0^T\sigma^2(Y_t)S^2_tdt<\infty$ a.s.

The unique solution of the Langevin equation \eqref{Model1} $Y_t$ is
the so-called Ornstein--Uhlen\-beck (OU) process. Its properties make
it a suitable tool for modeling the volatility in financial markets.
One of the most important features is the mean-rever\-sion property. The
OU process is Gaussian with the following characteristics:
\begin{eqnarray*}
E[Y_t]=Y_0\operatorname{e}^{-\alpha t}, \qquad \operatorname {Var}[Y_t]=\frac{k^2}{2\alpha}\bigl(1-\operatorname{e}^{-2\alpha t}\bigr).
\end{eqnarray*}

Moreover, the OU process is Markov and admits the explicit representation
\[
Y_t=Y_0 \operatorname{e}^{-\alpha t}+k\int
_{0}^t \operatorname {e}^{-\alpha(t-s)}dW_s.
\]

We can represent the process $W$ in the form
\[
W_t=\rho B_t+\sqrt{1-\rho^2}
Z_t,
\]
where $Z$ is a Wiener process independent of $B$. In what follows, we
will use this representation. Notice that $\mathcal{F}^{(B,W)}= \mathcal
{F}^{(B,Z)}$, where the filtration $\{\mathcal{F}^{(B,Z)}_t,0 \leq t
\leq T\}$ is generated by independent Wiener processes $B$ and $Z$.

\section{Definitions and preliminary results}\label{sec3}

Most of the information presented in this section can be found in more
detail in \cite{book2,book5} (and other references below).

We consider the market with one risky asset and one risk-free asset.
Evolutions of prices of both assets are given by a semimartingale
process $(\hat{S}_t)_{t=0}^T$ and deterministic process
$(B_t)_{t=0}^{T}=\operatorname{e}^{rt}$, respectively, where $r$ is a
constant risk-free rate of return. We introduce the discounted price
process $({S}_t)_{t=0}^T=\operatorname{e}^{-rt}\hat{S}_t$.

Agents acting in the market may buy or sell risky asset and make their
decisions concerning the structure of their portfolios basing upon the
information available at the moment of decision. This principle can be
formalized by the following definition.
\begin{defin}\label{TS}
A trading strategy is a predictable process $(\pi_t)_{t=0}^T$. The
value~$\pi_t$ of this process represents the amount of asset $\hat{S}$
in a portfolio at time $t$.
\end{defin}

Certain amount of preliminary concepts is necessary in order to
introduce the essential notion of admissible self-financing strategy.
Let a semimartingale~$S$ admit the decomposition $S=S_0+A+M$, where $A$
is a bounded-variation process, and $M$ is a local martingale.
According, for example, to \cite{book5}, p.~635, there is a
nondecreasing adapted (to the filtration $(\mathcal{F}_t)_{t \geq0}$)
process $C=(C_t)_{t \geq0}$, $C_0=0$, and adapted processes
$c=(c_t)_{t \geq0}$ and $\hat{c}=(\hat{c}_t)_{t \geq0}$ such that
\begin{equation*}
A_t=\int_0^t c_s
dC_s, \quad t>0,
\end{equation*}
and the quadratic variation equals
\begin{equation*}
[M,M]_t=\int_0^t
\hat{c}_s dC_s.
\end{equation*}

\begin{defin}\label{TS2}
Let $\pi$ be a predictable process. We shall say that:
\begin{itemize}
\item$\pi\in L_{\mathit{var}}(A)$ if for all $\omega\in\varOmega$, we have $\int_0^t \pi_s c_s dC_s<\infty$, $t>0$;
\item$\pi\in L_{\mathit{loc}}^q(M)$, $q \geq1$, if there exists a sequence of
stopping times $\tau_n$ approaching $\infty$ as $n \rightarrow\infty$
such that
\begin{equation*}
\mathbb{E} \Biggl[\int_0^{\tau_n} \pi^2
\hat{c}_s dC_s \Biggr]^{q/2}<\infty;
\end{equation*}
\item$\pi\in L^q(S)$ if there exists a representation $S=S_0+A+M$
such that $\pi\in L_{\mathit{var}}(A) \cap L_{\mathit{loc}}^q(M)$.
\end{itemize}
\end{defin}

\begin{defin}\label{TS3}
A trading strategy is called admissible (relative to the price process
$S$) if $\pi\in L^1(S)$.
\end{defin}

\begin{defin}\label{TS4}
An admissible strategy is said to be self-financing (relative to the
price process $S$) or, equivalently, $\pi\in \mathit{SF}(S)$ if its value
$S_t^{\pi}=\pi_t S_t$ has a~representation $S_t^{\pi}=S_0^{\pi}+\int_0^t \pi_s dS_s$.
\end{defin}

Further, we define two particular classes of trading strategies along
with the corresponding classes of $\mathcal{F}_T$-measurable pay-off
functions $\psi=\psi(\omega)$ that can be majorized by returns of
strategies belonging to each class.

\begin{defin}\label{TSPa}
For each $a\geq0$, define
\begin{equation*}
\varPi_a(S)=\bigl\{\pi\in \mathit{SF}(S): S_t^{\pi}
\geq-a, \ t \in[0,T]\bigr\}
\end{equation*}
and
\begin{equation*}
\varPsi_{+}(S)=\Biggl\{\psi\in L_{\infty}(\varOmega,
\mathcal{F}_T,\mathbb{P}) : \psi\leq\int_0^T(
\pi_s,dS_s) \ \text{for some} \ \pi\in
\varPi_{+}(S)\Biggr\},
\end{equation*}
where $\varPi_{+}(S)=\bigcup_{a \geq0} \varPi_a(S)$.
\end{defin}

\begin{defin}\label{TSPg}
Let $g(S_t)=g^0+g^1S_t$, $g_0 \geq0$, $g_1 \geq0$. Define
\begin{equation*}
\varPi_g(S)=\bigl\{\pi\in \mathit{SF}(S): S_t^{\pi}
\geq-g(S_t), \ t \in[0,T]\bigr\}
\end{equation*}
and
\begin{equation*}
\varPsi_{g}(S)=\Biggl\{\psi\in L_g(\varOmega,
\mathcal{F}_T,\mathbb{P}) : \psi\leq \int_0^T(
\pi_s,dS_s) \ \text{for some} \ \pi\in
\varPi_g(S)\Biggr\},
\end{equation*}
where $L_g(\varOmega,\mathcal{F}_T,\mathbb{P})$ is the set of $\mathcal
{F}_T$-measurable random variables $\psi$ such that $|\psi|\leq g(S_T)$.
\end{defin}

We denote the closures of the sets $\varPsi_{+}(S)$ and $\varPsi_{g}(S)$ with
respect to norms $\|\cdot\|_{\infty}$ and $\|\cdot\|_g$ (see \cite
{book5}, p.~648, for definitions of these norms) by $\overline{\varPsi
}_{+}(S)$ and $\overline{\varPsi}_{g}(S)$, respectively.

Now following the notation presented in \cite{book5}, we proceed to the
main definitions of absence of arbitrage.

\begin{defin}\label{NAplus}
We say that the property $\overline{NA}_{+}$ (or equivalently that
the~market is $\overline{NA}_{+}$) holds if
\begin{equation*}
\overline{\varPsi}_{+}(S)\cap L^{+}_{\infty}(
\varOmega,\mathcal{F}_T,\mathbb {P})=\{0\},
\end{equation*}
where $L^{+}_{\infty}(\varOmega,\mathcal{F}_T,\mathbb{P})$ is the subset
of nonnegative random variables in
$L_{\infty}(\varOmega,\mathcal{F}_T,\mathbb{P})$.
\end{defin}

\begin{defin}\label{NAg}
We say that the property $\overline{NA}_g$ holds (or equivalently that
the~market is $\overline{NA}_g$) if
\begin{equation*}
\overline{\varPsi}_g(S)\cap L^{+}_{\infty}(
\varOmega,\mathcal{F}_T,\mathbb {P})=\{0\}.
\end{equation*}
\end{defin}

There are two theorems that establish necessary and sufficient
conditions for the absence of arbitrage in the market in terms of
equivalent (local) martingale measures. An important condition that
will be addressed further is the local boundedness of the price process.

\begin{defin}\label{EMMeas}
A probability measure $\mathbb{Q}$, which is equivalent to the
objective measure $\mathbb{P}$, is called an equivalent (local)
martingale measure if the discounted price process is a (local)
martingale under the measure $\mathbb{Q}$.
\end{defin}

\begin{defin}
A stochastic process $S$ is called locally bounded if there exists a
sequence $(\tau_n)_{n=1}^{\infty}$ of stopping times increasing a.s. to
$+\infty$ and such that the stopped processes $S_t^{\tau_n}=S_{t\wedge
\tau_n}$ are uniformly bounded for each $n \in\mathds{N}$.
\end{defin}

\begin{thm}\textup{(\cite{book5})}
\label{NAPlusCriteria}
Let a semimartingale $S$ be locally bounded. Then the market is
$\overline{NA}_{+}$ if and only if there exists an equivalent local
martingale measure \emph{(}ELMM\emph{)}.
\end{thm}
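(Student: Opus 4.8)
\emph{Proof proposal.} The plan is to prove the two implications separately. The implication ``ELMM exists $\Rightarrow\overline{NA}_+$'' is elementary, whereas the converse is the locally bounded version of the Fundamental Theorem of Asset Pricing (Delbaen--Schachermayer) and carries all the analytic weight.

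For the easy direction, let $\mathbb{Q}\sim\mathbb{P}$ be an ELMM. If $\pi\in\varPi_+(S)$, then by Definition~\ref{TSPa} the value process $S^\pi_\cdot=S^\pi_0+\int_0^\cdot\pi_s\,dS_s$ is bounded below and, $S$ being a $\mathbb{Q}$-local martingale, is itself a $\mathbb{Q}$-local martingale; a local martingale bounded below is a $\mathbb{Q}$-supermartingale by Fatou's lemma, so $\mathbb{E}_{\mathbb{Q}}\bigl[\int_0^T\pi_s\,dS_s\bigr]\le0$. Hence every $\psi\in\varPsi_+(S)$ satisfies $\mathbb{E}_{\mathbb{Q}}[\psi]\le0$, and since $\|\cdot\|_\infty$-convergence implies $L^1(\mathbb{Q})$-convergence (recall $\mathbb{Q}\sim\mathbb{P}$ and all elements are uniformly bounded), the same inequality holds for every $\psi\in\overline{\varPsi}_+(S)$. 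If moreover $\psi\ge0$, then $\mathbb{E}_{\mathbb{Q}}[\psi]\le0$ forces $\psi=0$ $\mathbb{Q}$-a.s., hence $\mathbb{P}$-a.s., which is exactly $\overline{NA}_+$.

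For the converse, assume $\overline{NA}_+$. Put $K=\bigl\{\int_0^T\pi_s\,dS_s:\pi\in\varPi_+(S)\bigr\}$ and let $C=\varPsi_+(S)$ be the set of bounded $\mathcal{F}_T$-measurable random variables dominated by an element of $K$; this is a convex cone whose $\|\cdot\|_\infty$-closure meets $L^+_\infty(\varOmega,\mathcal{F}_T,\mathbb{P})$ only in $\{0\}$ by hypothesis. The decisive step is to show that $C$ is closed in the weak-$*$ topology $\sigma(L_\infty,L_1)$. By the Krein--\v{S}mulian theorem this reduces to showing that $C\cap\{\|\psi\|_\infty\le1\}$ is weak-$*$ closed, on which set the topology is metrizable so that sequential closedness suffices; the key lemma here is that a sequence $(\psi_n)\subset C$ with $\psi_n\ge-1$ admits forward convex combinations $g_n\in\operatorname{conv}(\psi_n,\psi_{n+1},\dots)$ converging a.s.\ to some $g$ with values in $[-1,+\infty]$, and that $\overline{NA}_+$ forces $g$ to be finite a.s.\ and to lie in $C$. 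This convergence lemma for sequences of stochastic integrals under admissibility constraints is the analytic heart of the argument and the step where the hypothesis is genuinely consumed. Granting weak-$*$ closedness, a Hahn--Banach (Kreps--Yan) separation produces, for each $h\in L^+_\infty\setminus\{0\}$, a probability $\mathbb{Q}_h\ll\mathbb{P}$ with $\sup_{f\in C}\mathbb{E}_{\mathbb{Q}_h}[f]\le0<\mathbb{E}_{\mathbb{Q}_h}[h]$, and an exhaustion argument over countably many such $h$ yields a single $\mathbb{Q}\sim\mathbb{P}$ with $\mathbb{E}_{\mathbb{Q}}[f]\le0$ for all $f\in C$.

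It remains to verify that $\mathbb{Q}$ is an equivalent \emph{local} martingale measure. By local boundedness choose stopping times $\tau_n\uparrow\infty$ with $S^{\tau_n}$ uniformly bounded; it suffices that each bounded process $S^{\tau_n}$ is a $\mathbb{Q}$-martingale. For $0\le s<t\le T$ and $A\in\mathcal{F}_s$, the simple strategy $\mathbf{1}_A\,\mathbf{1}_{(s,\,t]}$ stopped at $\tau_n$ is self-financing and has bounded (hence bounded-below) value process, so it belongs to $\varPi_+(S)$, and the same holds for its negative; consequently $\pm(S^{\tau_n}_t-S^{\tau_n}_s)\mathbf{1}_A\in C$ and therefore $\mathbb{E}_{\mathbb{Q}}\bigl[(S^{\tau_n}_t-S^{\tau_n}_s)\mathbf{1}_A\bigr]=0$, which is precisely the $\mathbb{Q}$-martingale property of $S^{\tau_n}$. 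Hence $S$ is a $\mathbb{Q}$-local martingale and $\mathbb{Q}$ is the desired ELMM. The expected main obstacle is the weak-$*$ closedness of $C$: the convergence lemma is delicate and truly requires combining admissibility with $\overline{NA}_+$, whereas the separation step, the exhaustion argument, and the final identification of the martingale property are comparatively routine.
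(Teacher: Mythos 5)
This theorem is not proved in the paper at all: it is quoted verbatim from the literature (Shiryaev's book, resting ultimately on the Delbaen--Schachermayer Fundamental Theorem of Asset Pricing for locally bounded semimartingales), so there is no in-paper argument to compare yours against. Your outline does reproduce the canonical architecture of that proof correctly: supermartingale argument for the easy implication, Kreps--Yan separation plus an exhaustion over $h\in L^+_\infty\setminus\{0\}$ for the converse, and the identification of the (local) martingale property via stopped simple strategies $\pm\mathbf{1}_A\mathbf{1}_{(s,t]}$, which is exactly where local boundedness is used.

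However, as a proof your proposal has a genuine gap, and it sits exactly where you flag it: the weak-$*$ (equivalently Fatou) closedness of the cone $C$. You reduce it via Krein--\v{S}mulian to a sequential statement and invoke a ``convergence lemma'' for convex combinations of outcomes of admissible integrands, but you neither prove it nor can it be dispatched in a few lines --- showing that the a.s.\ limit $g$ is again dominated by a stochastic integral of an admissible strategy is the main theorem of Delbaen--Schachermayer (their deep result on convergence of sequences of admissible stochastic integrals, occupying the bulk of their 1994 paper), and it consumes the full strength of the $\overline{NA}_+$ hypothesis (their NFLVR), not merely classical no-arbitrage. ``Granting'' that step is granting the theorem. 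Two smaller points: in the easy direction, the claim that $\int_0^\cdot\pi_s\,dS_s$ is a $\mathbb{Q}$-local martingale because $S$ is one is not automatic for general admissible $\pi$; you need the Ansel--Stricker result that a bounded-below stochastic integral with respect to a local martingale is a local martingale (hence a supermartingale). And the Kreps--Yan step should be stated for the weak-$*$ closure of $C$ together with the observation that $C\supset -L^+_\infty$, so that the separating functionals are nonnegative and normalize to probability densities. With the closedness theorem cited explicitly (rather than sketched), your argument would be a faithful account of the standard proof of the result the paper only quotes.
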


\begin{thm}\textup{(\cite{book5})}
\label{NAgCriteria}
Let a semimartingale $S$ be locally bounded. Then the market is
$\overline{NA}_g$ if and only if there exists an equivalent martingale
measure \emph{(}EMM\emph{)}.
\end{thm}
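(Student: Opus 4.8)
The result is a Fundamental-Theorem-of-Asset-Pricing type equivalence quoted from \cite{book5}; the plan is to run the Kreps--Yan / Delbaen--Schachermayer separation argument with the constant admissibility bound replaced throughout by the affine bound $g(S_t)=g^0+g^1S_t$. I assume $g^1>0$ (for $g^1=0$ the class $\varPi_g$ is a constant-bound class and one is back in the situation of Theorem~\ref{NAPlusCriteria}, which delivers only an ELMM) and, in the functional-analytic step, $g^0>0$ so that $g(S_T)>0$ a.s.\ (otherwise one restricts to $\{g(S_T)>0\}$); recall also that $S\ge0$.

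\emph{If an EMM exists, then $\overline{NA}_g$ holds.} Let $\mathbb{Q}$ be an EMM, so $S$ is a true $\mathbb{Q}$-martingale and $E_{\mathbb{Q}}[g(S_T)]=g(S_0)<\infty$. For $\pi\in\varPi_g(S)$ the value process $S^\pi_t=\pi_tS_t=S^\pi_0+\int_0^t\pi_s\,dS_s$ satisfies $S^\pi_t+g^1S_t=c_\pi+\int_0^t(\pi_s+g^1)\,dS_s$ with $c_\pi$ constant, and admissibility gives $S^\pi_t+g^1S_t\ge-g^0$; being a $\mathbb{Q}$-local martingale bounded below by a constant, $S^\pi+g^1S$ is a $\mathbb{Q}$-supermartingale by the Ansel--Stricker lemma. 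Taking expectations at $t=T$ and subtracting $g^1E_{\mathbb{Q}}[S_T]=g^1S_0$ gives $E_{\mathbb{Q}}[S^\pi_T]\le S^\pi_0$, i.e.\ $E_{\mathbb{Q}}[\int_0^T\pi_s\,dS_s]\le0$. Hence $E_{\mathbb{Q}}[\psi]\le0$ for all $\psi\in\varPsi_g(S)$, and since $g(S_T)\in L^1(\mathbb{Q})$, convergence in $\|\cdot\|_g$ implies convergence in $L^1(\mathbb{Q})$, so the inequality passes to every $\psi\in\overline{\varPsi}_g(S)$. If in addition $\psi\ge0$, then $\psi=0$ $\mathbb{Q}$-a.s.\ and hence $\mathbb{P}$-a.s., which is $\overline{NA}_g$.

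\emph{Conversely, $\overline{NA}_g$ implies the existence of an EMM.} On the Banach space $L_g=L_g(\varOmega,\mathcal{F}_T,\mathbb{P})$ with the weighted norm $\|\cdot\|_g$ --- isomorphic to $L_\infty(\mathbb{P})$ via division by $g(S_T)$ --- the set $\overline{\varPsi}_g(S)$ is a $\|\cdot\|_g$-closed convex cone which, by $\overline{NA}_g$, meets the nonnegative cone only at $0$. A Kreps--Yan separation argument then produces a strictly positive continuous linear functional on $L_g$ that is nonpositive on $\overline{\varPsi}_g(S)$; transported through the isomorphism and normalized, it is represented by a density $Z>0$ with $Z\,g(S_T)\in L^1(\mathbb{P})$, i.e.\ by a probability measure $\mathbb{Q}\sim\mathbb{P}$ with $E_{\mathbb{Q}}[g(S_T)]<\infty$, equivalently $E_{\mathbb{Q}}[S_T]<\infty$, and such that $E_{\mathbb{Q}}[\int_0^T\pi_s\,dS_s]\le0$ for every $\pi\in\varPi_g(S)$. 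It remains to upgrade this to the $\mathbb{Q}$-martingale property of $S$. Since $S\ge0$ and $g^0,g^1\ge0$, both $\pi\equiv1$ (value $S_t\ge-g(S_t)$) and $\pi\equiv-g^1$ (value $-g^1S_t\ge-g(S_t)$) lie in $\varPi_g(S)$, giving $E_{\mathbb{Q}}[S_T-S_0]\le0$ and $E_{\mathbb{Q}}[S_T-S_0]\ge0$, hence $E_{\mathbb{Q}}[S_T]=S_0$. More generally, for $0\le s\le t\le T$, $A\in\mathcal{F}_s$, and a stopping time $\sigma$ localizing $S$ (here the local boundedness hypothesis enters), the simple strategies $\pm c\,\mathbf{1}_A\mathbf{1}_{(s\wedge\sigma,\,t\wedge\sigma]}$ with $c$ small enough to respect the affine lower bound belong to $\varPi_g(S)$; plugging them into $E_{\mathbb{Q}}[\int_0^T\pi_s\,dS_s]\le0$, removing the localization with the aid of $E_{\mathbb{Q}}[S_T]<\infty$ and dominated convergence, one obtains $E_{\mathbb{Q}}[S_t\mathbf{1}_A]=E_{\mathbb{Q}}[S_s\mathbf{1}_A]$, i.e.\ $S$ is a true $\mathbb{Q}$-martingale and $\mathbb{Q}$ is an EMM.

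I expect the genuine difficulty to be twofold. First, the separation step: one must guarantee that Hahn--Banach / Kreps--Yan delivers a \emph{countably additive} separating measure and not merely a finitely additive one --- this is precisely what is built into taking the $\|\cdot\|_g$-closure in Definition~\ref{NAg}, and it is the true-martingale analogue of the closedness input underlying Theorem~\ref{NAPlusCriteria}. Second, the passage from ``every $\varPi_g$-admissible gain has nonpositive $\mathbb{Q}$-expectation'' to ``$S$ is a true $\mathbb{Q}$-martingale'': the affine bound is exactly the ingredient that makes proportional short sales admissible, hence the two-sided identity $E_{\mathbb{Q}}[S_T]=S_0$ rather than only $E_{\mathbb{Q}}[S_T]\le S_0$, and local boundedness is what supplies enough admissible simple strategies to localize and, via $E_{\mathbb{Q}}[S_T]<\infty$, to pass to the limit in the conditional-expectation identities.
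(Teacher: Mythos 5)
The paper itself gives no proof of this statement: it is quoted verbatim from \cite{book5} (and goes back to the Delbaen--Schachermayer/Yan fundamental theorems), so your attempt has to be measured against the literature argument rather than anything internal to the paper. Your ``if'' direction is essentially correct and is the standard one: adding $g^1$ units of stock to a $g$-admissible value process gives a stochastic integral bounded below by a constant, Ansel--Stricker yields the $\mathbb{Q}$-supermartingale property, $\mathbb{E}^{\mathbb{Q}}[S_T]=S_0$ lets you subtract the stock position, and $g(S_T)\in L^1(\mathbb{Q})$ carries the inequality through the $\|\cdot\|_g$-closure. (Minor caveat: you invoke $S\ge 0$, $g^1>0$, $g^0>0$, none of which are hypotheses of the theorem as stated, though they are harmless in this paper's application.)

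The genuine gap is in the converse, at exactly the point you flag and then wave away. You claim the countable additivity of the separating functional ``is built into taking the $\|\cdot\|_g$-closure in Definition~\ref{NAg}''. It is not: $\overline{\varPsi}_g(S)$ is only norm-closed in a space isomorphic to $L_\infty$, and Hahn--Banach separation of a norm-closed cone from the positive cone in $L_\infty$ produces in general only a finitely additive element of $(L_\infty)^*=ba$. The Kreps--Yan theorem, which does produce a strictly positive density in $L^1$, requires the cone to be $\sigma(L_\infty,L_1)$-closed, and proving that the cone of claims dominated by admissible outcomes is weak-$*$ (Fatou) closed under the no-arbitrage hypothesis is precisely the deep analytic core of the Delbaen--Schachermayer FTAP (maximal admissible outcomes, convex-combination/Koml\'os-type limits, boundedness in probability); none of this is supplied or replaced in your sketch, so the main content of the theorem remains unproved. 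For the $g$-admissible version the standard route is not to redo that machinery with the affine bound but to change numéraire (divide by $1+S$ or by $g(S)$): $g$-admissible strategies become ordinarily admissible for a bounded price process, the already-proved FTAP gives an ELMM there, boundedness upgrades it to a true martingale measure, and translating back gives an EMM for $S$. This also repairs your final upgrade step, which as written needs uniform integrability you have not established to remove the localizing stopping times; a cleaner finish is to note that once $S$ is a nonnegative $\mathbb{Q}$-local martingale it is a supermartingale, and the identity $\mathbb{E}^{\mathbb{Q}}[S_T]=S_0$ (from $\pi\equiv 1$ and $\pi\equiv-g^1$, after a truncation argument to stay inside $L_g$) then forces the true martingale property.
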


The following theorem is a corollary of Proposition~6.1 from \cite
{paper2} and defines the construction of ELMM in the model \eqref
{Model0}--\eqref{Model1}.

\begin{thm}\label{Frey}
A probability measure $\mathbb{Q}$, which is equivalent to the
objective measure $\mathbb{P}$ on $\mathcal{F}_T$, is an ELMM for the
process $S$ defined by the model~(\ref{Model0})--(\ref{Model1}) on
$\mathcal{F}_T$ if and only if there exists a progressively measurable
process $\nu=(\nu_t)_{0\leq t \leq T}$, $\int^T_0 \nu_s^2ds<\infty$
$\mathbb{P}$-a.s., such that the local martingale $(L_t)_{0\leq t \leq
T}$ defined by

\begin{align}
\label{Lt}
\notag L_t=d\mathbb{Q}/d\mathbb{P}|_{\mathcal{F}_t} &=\exp\Biggl( \int_0^t (r-\mu)/\sigma(Y_s)dB_s+\int_0^t \nu_sdZ_s\\
&\quad -\frac{1}{2}\int_0^t\bigl((r-\mu)^2/\sigma^2(Y_s)+\nu^2_s\bigr)ds \Biggr)
\end{align}
satisfies $\mathbb{E}L_T=1$.
\end{thm}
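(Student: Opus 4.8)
The plan is to apply Girsanov's theorem together with the characterization of equivalent local martingale measures. First I would fix the representation $W_t = \rho B_t + \sqrt{1-\rho^2}\,Z_t$ with $B$ and $Z$ independent Wiener processes, so that the filtration is $\mathbf{F} = \mathcal{F}^{(B,Z)}$. Any probability measure $\mathbb{Q} \sim \mathbb{P}$ on $\mathcal{F}_T$ has a density process $L_t = d\mathbb{Q}/d\mathbb{P}|_{\mathcal{F}_t}$ which is a strictly positive $\mathbb{P}$-martingale with $L_0 = 1$. Since $L$ is a positive local martingale on a Brownian filtration, the martingale representation theorem lets me write it as a stochastic exponential $L_t = \mathcal{E}(\int_0^\cdot \theta_s\, dB_s + \int_0^\cdot \nu_s\, dZ_s)_t$ for some progressively measurable $\theta, \nu$ with $\int_0^T(\theta_s^2 + \nu_s^2)\,ds < \infty$ a.s. Thus every candidate density has the form appearing in \eqref{Lt} but with a general drift $\theta_s$ in the $dB_s$ term; it remains to identify $\theta_s = (r-\mu)/\sigma(Y_s)$.

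To pin down $\theta$, apply Girsanov: under $\mathbb{Q}$ the processes $\widetilde{B}_t = B_t - \int_0^t \theta_s\,ds$ and $\widetilde{Z}_t = Z_t - \int_0^t \nu_s\,ds$ are independent $\mathbb{Q}$-Brownian motions. Rewriting the discounted price $S_t = e^{-rt}\hat{S}_t$ via Itô's formula, its dynamics under $\mathbb{Q}$ become
\begin{equation*}
dS_t = S_t\bigl((\mu - r) + \sigma(Y_t)\theta_t\bigr)\,dt + S_t\sigma(Y_t)\,d\widetilde{B}_t.
\end{equation*}
The process $S$ is continuous and hence locally bounded, and a continuous local martingale is characterized by having zero finite-variation part; since $\sigma$ is bounded below by $c>0$, the drift vanishes $dt\times d\mathbb{P}$-a.e. if and only if $\theta_t = (r-\mu)/\sigma(Y_t)$. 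This is forced, so $\mathbb{Q}$ is an ELMM if and only if $L_T$ has exactly the stated form and, being a priori only a local martingale, is a true martingale — equivalently $\mathbb{E}L_T = 1$. Conversely, if $\nu$ is any progressively measurable process with $\int_0^T \nu_s^2\,ds < \infty$ a.s. and $\mathbb{E}L_T = 1$ for $L$ defined by \eqref{Lt}, then $L$ is a genuine density process, the measure $\mathbb{Q}$ it defines is equivalent to $\mathbb{P}$ on $\mathcal{F}_T$, and the computation above (run in reverse) shows $S$ is a $\mathbb{Q}$-local martingale.

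The main obstacle is the integrability bookkeeping: one must be careful that $\int_0^T \bigl((r-\mu)^2/\sigma^2(Y_s) + \nu_s^2\bigr)\,ds < \infty$ $\mathbb{P}$-a.s., which follows because $\sigma \geq c > 0$ gives $(r-\mu)^2/\sigma^2(Y_s) \leq (r-\mu)^2/c^2$, so the stochastic integrals in \eqref{Lt} are well-defined and $L$ is at least a local martingale; the condition $\mathbb{E}L_T = 1$ is then precisely what upgrades it to a true martingale so that $\mathbb{Q}$ is a bona fide probability measure. Since this theorem is stated as a corollary of Proposition~6.1 in \cite{paper2}, I would also note that the argument is essentially a specialization of that general result to the concrete coefficients of \eqref{Model0}--\eqref{Model1}, with the only model-specific inputs being assumptions (A2) and (A3).
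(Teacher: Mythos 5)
Your proposal is correct, but note that the paper itself offers no proof of Theorem~\ref{Frey}: it is stated as a corollary of Proposition~6.1 in \cite{paper2}, so your argument fills in what the paper delegates to that citation. Your route --- martingale representation on the filtration $\mathcal{F}^{(B,Z)}$ to write any candidate density process as the stochastic exponential $\mathcal{E}\bigl(\int_0^{\cdot}\theta_s\,dB_s+\int_0^{\cdot}\nu_s\,dZ_s\bigr)$, then Girsanov together with uniqueness of the continuous semimartingale decomposition to force $\theta_s=(r-\mu)/\sigma(Y_s)$ --- is exactly the standard specialization of Frey's general result to the model \eqref{Model0}--\eqref{Model1}, and the model-specific inputs you invoke (assumption (A2), which gives $(r-\mu)^2/\sigma^2(Y_s)\le (r-\mu)^2/c^2$ and hence the a.s.\ integrability needed for \eqref{Lt} to define a local martingale, plus strict positivity of the price) are the right ones. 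Two small points worth tightening: the density process of an equivalent measure is automatically a strictly positive uniformly integrable $\mathbb{P}$-martingale, not merely a local martingale, and this is what licenses both the representation step (after dividing by $L_{s-}=L_s>0$) and the application of Girsanov; and the (local) martingale property in the theorem refers, via Definition~\ref{EMMeas}, to the discounted price $e^{-rt}S_t$, which you correctly used when identifying the drift --- under $\mathbb{Q}$ the undiscounted $S$ has drift $r$, and it is the discounted process whose finite-variation part must vanish.
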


Denote by $\mathcal{LM}^S(\mathbb{P})$ and $\mathcal{M}^S(\mathbb{P})$
the sets of ELMM and EMM in the market modeled by \eqref{Model0}--\eqref
{Model1}. It is obvious that $\mathcal{M}^S(\mathbb{P}) \subset\mathcal
{LM}^S(\mathbb{P})$.

Recall that there is a decomposition of a $\mathbb{P}$-semimartingale
$S$ into the sum of a local $\mathbb{P}$-martingale $M$ and an adapted
finite-variation process $A$: $S=S_0+M+A$.

\begin{defin}
A probability measure $\mathbb{Q}$, which is equivalent to the
objective measure $\mathbb{P}$, is called a minimal martingale measure
(MMM) if $\mathbb{Q}=\mathbb{P}$ on~$\mathcal{F}_0$ and any
square-integrable $\mathbb{P}$-martingale strictly orthogonal to the
process $M$ is a local $\mathbb{Q}$-martingale.
\end{defin}

A minimal martingale measure is unique (see \cite{paper3}).

\section{Absence of arbitrage in the general model}\label{sec4}

In this section, we investigate the absence of arbitrage in the model
\eqref{Model0}--\eqref{Model1}. Notice that we further deal with an
undiscounted process $S$ defined in Section~\ref{sec2}.\vadjust{\eject}

\begin{thm}
The market defined by the model (\ref{Model0})--(\ref{Model1}) with
assumptions (A1)--(A3):

\begin{itemize}

\item[\rm(i)] satisfies $\overline{NA}_{+}$ property;

\item[\rm(ii)] satisfies $\overline{NA}_{g}$ property, provided that
for some ELMM $\mathbb{Q}$,
%
\begin{equation}
\label{Sigma2X2} \mathbb{E}^{\mathbb{Q}}\int_0^T
\sigma^2(Y_s)X^2_sds<\infty.
\end{equation}
\end{itemize}
\end{thm}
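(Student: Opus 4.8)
The plan is to reduce both assertions to the equivalence criteria of Theorems~\ref{NAPlusCriteria} and~\ref{NAgCriteria}, whose common hypothesis is local boundedness of the price process. So I would first record that, by the explicit exponential representation of the solution of~\eqref{Model0} given in Section~\ref{sec2}, the process $S$ has continuous trajectories; hence the stopping times $\tau_n=\inf\{t\in[0,T]:S_t\ge n\}\wedge T$ form a localizing sequence along which $S$ is bounded, so $S$ is locally bounded and both theorems are applicable.

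For part~(i) I would exhibit an ELMM explicitly via Theorem~\ref{Frey}, choosing the trivial market price of risk $\nu\equiv 0$ in the $Z$-direction (admissible since $\int_0^T\nu_s^2\,ds=0$). Then $L_t=\exp(\int_0^t\frac{r-\mu}{\sigma(Y_s)}\,dB_s-\frac12\int_0^t\frac{(r-\mu)^2}{\sigma^2(Y_s)}\,ds)$, and assumption~(A2), i.e. $\sigma(Y_s)\ge c>0$, bounds the compensator density: $\frac{(r-\mu)^2}{\sigma^2(Y_s)}\le\frac{(r-\mu)^2}{c^2}$. Novikov's condition $\mathbb{E}\exp(\frac12\int_0^T\frac{(r-\mu)^2}{\sigma^2(Y_s)}\,ds)\le\exp(\frac{(r-\mu)^2T}{2c^2})<\infty$ therefore holds, so $L$ is a true martingale with $\mathbb{E}L_T=1$, the measure $d\mathbb{Q}=L_T\,d\mathbb{P}$ is an ELMM, and Theorem~\ref{NAPlusCriteria} gives $\overline{NA}_+$.

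For part~(ii) I would take an ELMM $\mathbb{Q}$ satisfying~\eqref{Sigma2X2} and show that it is in fact an EMM. Under $\mathbb{Q}$ the discounted price $X$ is a local martingale, and since the $B$-drift of every ELMM equals $(r-\mu)/\sigma(Y_\cdot)$ (read off from~\eqref{Lt}), a Girsanov computation gives $dX_t=\sigma(Y_t)X_t\,dB^{\mathbb{Q}}_t$ with $B^{\mathbb{Q}}$ a $\mathbb{Q}$-Brownian motion; hence $[X]_T=\int_0^T\sigma^2(Y_s)X_s^2\,ds$, which is exactly the quantity in~\eqref{Sigma2X2} (the quadratic variation being invariant under an equivalent change of measure). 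Thus $\mathbb{E}^{\mathbb{Q}}[X]_T<\infty$, and since $X_0$ is deterministic, the standard estimate $\mathbb{E}^{\mathbb{Q}}(X_{t\wedge\tau_n})^2=X_0^2+\mathbb{E}^{\mathbb{Q}}[X]_{t\wedge\tau_n}\le X_0^2+\mathbb{E}^{\mathbb{Q}}[X]_T$ shows the stopped martingales $X^{\tau_n}$ are $L^2$-bounded, hence uniformly integrable, so $X$ is a genuine square-integrable $\mathbb{Q}$-martingale. Therefore $\mathbb{Q}\in\mathcal{M}^S(\mathbb{P})$, and Theorem~\ref{NAgCriteria} yields $\overline{NA}_g$.

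Most of this is routine bookkeeping, so the step I expect to need the most care is the upgrade from local martingale to true martingale in~(ii): one must correctly identify the quadratic variation $[X]$ with the integral in~\eqref{Sigma2X2} and run the $L^2$-localization argument, noting that~\eqref{Sigma2X2} controls $[X]$ rather than an exponential moment, so a Novikov-type argument as in~(i) is neither available nor needed here.
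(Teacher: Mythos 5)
Your proposal is correct and follows essentially the same route as the paper: local boundedness from continuity, Theorem~\ref{Frey} with $\nu\equiv 0$ plus Novikov (using $\sigma\ge c>0$) to produce an ELMM for part~(i), and for part~(ii) the representation $X_t=S+\int_0^t\sigma(Y_s)X_s\,dB^{\mathbb{Q}}_s$ together with the integrability condition~\eqref{Sigma2X2} to upgrade $X$ to a true $\mathbb{Q}$-martingale, then Theorems~\ref{NAPlusCriteria} and~\ref{NAgCriteria}. The only (harmless) deviations are that you run the martingale argument for an arbitrary ELMM satisfying~\eqref{Sigma2X2}, whereas the paper applies it to the specific measure constructed in~(i), and that you spell out the $L^2$-localization step which the paper leaves implicit.
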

\begin{proof}
(i) Since $S$ is locally bounded due to its continuity, Theorem~\ref
{NAPlusCriteria} yields that in order to prove the first part of the
theorem, it suffices to show that $\mathcal{LM}^S(\mathbb{P}) \neq
\emptyset$.

Consider the process $L_t$ defined by \eqref{Lt} with $\nu=(\nu
_t)_{0\leq t \leq T}=0$. Let $L_T=d\mathbb{Q}/d\mathbb{P}|_{\mathcal
{F}_T}$. In view of Theorem~\ref{Frey}, it suffices to show that, under
such a choice of~$\nu$, we have
%
\begin{equation}
\label{ELT} \mathbb{E}L_T=1.
\end{equation}
It suffices to verify the Novikov condition
%
\begin{equation}
\label{Novikov} \mathbb{E}\exp \Biggl(\frac{1}{2} \int_0^T
\bigl((r-\mu)^2/\sigma^2(Y_s)+\nu
^2_s\bigr)ds \Biggr)< \infty.
\end{equation}

It follows from the boundedness away from zero of the function $\sigma$
(assumption (A2)) and our choice of $\nu$ that inequality \eqref
{Novikov} holds. Hence, $\mathbb{Q} \subset\mathcal{LM}^S(\mathbb
{P})$, which proves part (i) of the theorem.

(ii) Now let us show that the measure $\mathbb{Q}$ from (i) is an EMM.
Denote $\alpha(s):=(r-\mu)/\sigma(Y_s)$. Knowing that the measure
$\mathbb{Q}$ is equivalent to the measure $\mathbb{P}$ and is defined
by the Radon--Nikodym derivative $L_T=\frac{d\mathbb{Q}}{d\mathbb
{P}}|_T$, we may apply the Girsanov theorem to derive that the
processes $B_t^{\mathbb{Q}}:=B_t-\int_0^t \alpha(s)ds$, $Z_t^{\mathbb
{Q}}:=Z_t$, $0 \leq t \leq T$, are Wiener processes w.r.t.\  $\mathbb
{Q}$. The asset price process under the measure $\mathbb{Q}$ is a
solution of the stochastic differential equation
\begin{equation*}
dS_t=rS_tdt+\sigma(Y_t)S_tdB^{\mathbb{Q}}_t,
\end{equation*}
which yields the following representation for the discounted price
process $X_t:=\operatorname{e}^{-rt}S_t$:
%
\begin{equation}
\label{X_t} X_t=S+\int_0^t
\sigma(Y_s)X_sdB^{\mathbb{Q}}_s.
\end{equation}

Hence, provided that assumption (A2), as mentioned before, yields the
square integrability of $\sigma(Y_s)X_s$ on $[0,T]$, $X_t$ is a martingale.
Therefore, $\mathbb{Q} \subset\mathcal{M}^S(\mathbb{P})$, and by
Theorem~\ref{NAgCriteria} we deduce that the market $(S,Y)$ is
$\overline{NA}_g$.
\end{proof}

In order to prove $\overline{NA}_g$ and $\overline{NA}_{+}$ properties
of the market, we have proved the existence of one EMM. However, we
actually have a family of equivalent martingale measures in the market.
The discounted price process is of the form~\eqref{X_t} and thus is
always a martingale given the square integrability of $\sigma(Y_s)X_s$
and any \emph{admissible} choice of process $\nu$.

\begin{lemma}Let the market defined by (\ref{Model0})--(\ref{Model1})
with assumptions (A1)--(A3) and additional condition (\ref{Sigma2X2}), and let the measure $\mathbb{Q}$ be such that
\begin{equation*}
\mathbb{E}^{\mathbb{Q}}\exp \Biggl(\frac{1}{2} \int_0^T
\nu^2_sds \Biggr)< \infty.
\end{equation*}

Then $\mathbb{Q} \subset\mathcal{M}^S(\mathbb{P})$.
\end{lemma}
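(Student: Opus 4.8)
The plan is to verify, one after the other, the two requirements for $\mathbb{Q}$ to belong to $\mathcal{M}^S(\mathbb{P})$: that $\mathbb{Q}$ is an equivalent \emph{local} martingale measure, and that under $\mathbb{Q}$ the discounted price $X$ is a \emph{true}, not merely local, martingale. The first requirement is essentially free: the density $L_T$ is of the form~\eqref{Lt} with the given progressively measurable $\nu$ ($\int_0^T\nu_s^2\,ds<\infty$ a.s.), and -- since for the statement $\mathbb{E}^{\mathbb{Q}}\exp(\frac{1}{2}\int_0^T\nu_s^2\,ds)<\infty$ to make sense $\mathbb{Q}$ is already a genuine probability measure, i.e.\ $\mathbb{E}^{\mathbb{P}}L_T=1$ -- Theorem~\ref{Frey} gives $\mathbb{Q}\in\mathcal{LM}^S(\mathbb{P})$. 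If instead one wishes to re-derive $\mathbb{E}^{\mathbb{P}}L_T=1$, factor $L_T$ into the density of an intermediate measure $\widetilde{\mathbb{P}}$ driven only by the bounded integrand $(r-\mu)/\sigma(Y_\cdot)$ -- a true $\mathbb{P}$-martingale by the Novikov condition because $|(r-\mu)/\sigma(Y_\cdot)|\le|r-\mu|/c$ under (A2)--(A3) -- times the stochastic exponential of $\int\nu\,dZ$; as $Z$ remains a Wiener process under $\widetilde{\mathbb{P}}$ (being $\mathbb{P}$-independent of $B$), Novikov applies again under $\widetilde{\mathbb{P}}$.

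The substance of the lemma is the second requirement. By the Girsanov theorem, under $\mathbb{Q}$ the processes $B^{\mathbb{Q}}_t=B_t-\int_0^t(r-\mu)/\sigma(Y_s)\,ds$ and $Z^{\mathbb{Q}}_t=Z_t-\int_0^t\nu_s\,ds$ are independent Wiener processes, and exactly as in the proof of the theorem of Section~\ref{sec4} one obtains $X_t=S+\int_0^t\sigma(Y_s)X_s\,dB^{\mathbb{Q}}_s$. Hence $X$ is a strictly positive $\mathbb{Q}$-local martingale, so a $\mathbb{Q}$-supermartingale, and it is a genuine martingale on $[0,T]$ as soon as
\[
\mathbb{E}^{\mathbb{Q}}\int_0^T\sigma^2(Y_s)X_s^2\,ds<\infty ,
\]
since then $\int_0^\cdot\sigma(Y_s)X_s\,dB^{\mathbb{Q}}_s$ is a square-integrable $\mathbb{Q}$-martingale. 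So the whole matter reduces to this bound.

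To get it I would exploit the standing assumption~\eqref{Sigma2X2}. Let $\mathbb{Q}_0$ be the ELMM of the theorem of Section~\ref{sec4}, i.e.\ the measure~\eqref{Lt} with $\nu\equiv 0$ (which is, moreover, the minimal martingale measure), for which~\eqref{Sigma2X2} gives $\mathbb{E}^{\mathbb{Q}_0}\int_0^T\sigma^2(Y_s)X_s^2\,ds<\infty$, and transfer this to $\mathbb{Q}$. The $B$-parts of the two densities in~\eqref{Lt} coincide, so they cancel and $d\mathbb{Q}/d\mathbb{Q}_0=\exp(\int_0^T\nu_s\,dZ_s-\frac{1}{2}\int_0^T\nu_s^2\,ds)$, whence
\[
\mathbb{E}^{\mathbb{Q}}\int_0^T\sigma^2(Y_s)X_s^2\,ds=\mathbb{E}^{\mathbb{Q}_0}\Bigl[\exp\Bigl(\int_0^T\nu_s\,dZ_s-\frac{1}{2}\int_0^T\nu_s^2\,ds\Bigr)\int_0^T\sigma^2(Y_s)X_s^2\,ds\Bigr].
\]
The density factor is controlled by the hypothesis, which rewrites as $\mathbb{E}^{\mathbb{Q}_0}\exp(\int_0^T\nu_s\,dZ_s)=\mathbb{E}^{\mathbb{Q}}\exp(\frac{1}{2}\int_0^T\nu_s^2\,ds)<\infty$; combining this with the $\mathbb{Q}_0$-integrability of $\int_0^T\sigma^2(Y_s)X_s^2\,ds$ via H\"older's (or Young's) inequality yields the required finiteness. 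Then $X$ is a true $\mathbb{Q}$-martingale and, $\mathbb{Q}$ being equivalent to $\mathbb{P}$, $\mathbb{Q}\in\mathcal{M}^S(\mathbb{P})$.

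I expect this last measure-change estimate to be the main obstacle. The density $\exp(\int\nu\,dZ-\frac{1}{2}\int\nu^2\,ds)$ is unbounded, and the two integrability inputs live under different measures ($\mathbb{Q}_0$ and $\mathbb{Q}$), so the expectation must be split with conjugate exponents chosen so that neither the density nor $\int_0^T\sigma^2(Y_s)X_s^2\,ds$ is required to have more integrability than is on hand; the identity $\exp(\int_0^T\nu_s\,dZ_s)=\frac{d\mathbb{Q}}{d\mathbb{Q}_0}\,\exp(\frac{1}{2}\int_0^T\nu_s^2\,ds)$ is precisely what makes the exponential moment hypothesis on $\nu$ just strong enough for this. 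If, on the other hand,~\eqref{Sigma2X2} is understood to hold for the very measure $\mathbb{Q}$ of the lemma, this step is immediate and the only delicate point is the well-definedness of $\mathbb{Q}$ discussed in the first paragraph.
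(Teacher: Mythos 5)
The paper never spells out a proof of this lemma: it is stated as an immediate corollary of the preceding theorem's proof, with condition \eqref{Sigma2X2} understood for the measure $\mathbb{Q}$ under consideration. That intended argument is exactly your fallback reading: the exponential-moment (Novikov-type) condition on $\nu$ makes the density \eqref{Lt} a true martingale, so $\mathbb{Q}$ is an ELMM by Theorem~\ref{Frey}; Girsanov gives the representation \eqref{X_t} with $B^{\mathbb{Q}}$ a $\mathbb{Q}$-Wiener process; and the square integrability \eqref{Sigma2X2} then makes $X$ a genuine (square-integrable) $\mathbb{Q}$-martingale, whence $\mathbb{Q}\in\mathcal{M}^S(\mathbb{P})$. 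To that extent your proposal contains the paper's argument.

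The route you actually develop, however, has a genuine gap. Reading \eqref{Sigma2X2} as a statement about the ELMM $\mathbb{Q}_0$ with $\nu\equiv 0$, you must transfer $\mathbb{E}^{\mathbb{Q}_0}\int_0^T\sigma^2(Y_s)X_s^2\,ds<\infty$ to $\mathbb{Q}$, and you propose to do so by writing $\mathbb{E}^{\mathbb{Q}}[A]=\mathbb{E}^{\mathbb{Q}_0}[DA]$ with $A=\int_0^T\sigma^2(Y_s)X_s^2\,ds$, $D=\exp(\int_0^T\nu_s\,dZ_s-\frac12\int_0^T\nu_s^2\,ds)$, and then invoking H\"older or Young. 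But the only controls available are $\mathbb{E}^{\mathbb{Q}_0}[A]<\infty$ and $\mathbb{E}^{\mathbb{Q}_0}[\exp(\int_0^T\nu_s\,dZ_s)]<\infty$ (the latter is your correct rewriting of the hypothesis); these are first-moment bounds on each factor separately, whereas H\"older needs conjugate exponents $p,q>1$, i.e.\ strictly more integrability on one of the two factors than is assumed, and Young's inequality likewise produces powers $D^p$, $A^q$ that nothing controls. A product of two integrable unbounded random variables need not be integrable, and nothing in the hypotheses rules out $\mathbb{E}^{\mathbb{Q}_0}[DA]=\infty$: the mass of $A$ may sit precisely where $\int_0^T\nu_s\,dZ_s$ is large. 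So the step you yourself flag as the main obstacle is not merely delicate, it does not close as proposed; one needs either an $L^p$ bound ($p>1$) on $A$ or on the density, or simply to read \eqref{Sigma2X2} under $\mathbb{Q}$, as the paper implicitly does. A smaller mismatch of the same kind occurs in your optional re-derivation of $\mathbb{E}^{\mathbb{P}}L_T=1$: the second application of Novikov requires the exponential moment of $\frac12\int_0^T\nu_s^2\,ds$ under the intermediate measure $\widetilde{\mathbb{P}}$, while the lemma supplies it under $\mathbb{Q}$; this ambiguity is already present in the paper's formulation, but your sketch does not resolve it either.
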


Since we have more than one equivalent martingale measure in the
market, it is straightforward that the market is incomplete. Each EMM
in the market is defined by the process $\nu(s)=\nu(s,Y_s,S_s)$
associated with it. In the financial literature, the process $\nu_s$ is
called the market price of volatility risk.

Under the EMM $\mathbb{Q}^{\nu}$, the pair of processes $(S_t, Y_t)$
has the following representation:\vspace*{-9pt}
%
%
\begin{align}
dS_t & =  r S_tdt+\sigma(Y_t)S_tdB^\mathbb{Q}_t, \notag\\
dY_t & =   \bigg( -\alpha Y_t-k  \bigg( \rho\dfrac{\mu-r}{\sigma(Y_t)}+\nu(t) \sqrt{1-\rho^2} \bigg)  \bigg) dt+kdW^\mathbb{Q}_t,\label{ModelQ}
\end{align}
%
where the processes
\begin{align*}
B^\mathbb{Q}_t & = B_t+\int_0^t\dfrac{\mu-r}{\sigma(Y_s)}ds,\\
W^\mathbb{Q}_t & = \rho B^\mathbb{Q}_t+\sqrt{1-\rho^2}Z^\mathbb {Q}_t,\quad\mbox{and}\\
Z^\mathbb{Q}_t & = Z_t+\int_0^t\nu(s)ds
\end{align*}
are Wiener processes w.r.t.\ $\mathbb{Q}$ according to the Girsanov
theorem, and $B^\mathbb{Q}$ and~$Z^\mathbb{Q}$ are independent.

In the risk-neutral model (\ref{ModelQ}), the volatility process is not
the Ornstein--Uhlenbeck process anymore. So, generally speaking, there
is no analytic solution to the corresponding differential equation.
Therefore, we further consider a particular case of the general model,
which is defined by a set of assumptions concerning the form and
behavior of certain parameters of the model.

\section{Case of uncorrelated processes}\label{sec5}

Let us define a modified set of assumptions:

(B1) The Wiener processes $B $ and $W $ are independent, that is, $\rho=0$;

(B2) $=$ (A2);

(B3) $=$ (A3).

Assumption (B1) simplifies the risk-neutral model to the following form:
%
%
\begin{align}
dS_t & =  r S_tdt+\sigma(Y_t)S_tdB^\mathbb{Q}_t, \notag\\
dY_t & = \big(-\alpha Y_t-k \nu(t) \big)dt+kdZ^\mathbb{Q}_t,\label{ModelB}
\end{align}
%
where
\begin{align*}
B^\mathbb{Q}_t & = B_t+\int_0^t
\dfrac{\mu-r}{\sigma(Y_s)}ds\quad\mbox {and}
\\
Z^\mathbb{Q}_t & = Z_t+\int_0^t
\nu(s)ds
\end{align*}
are independent Wiener processes w.r.t. $\mathbb{Q}$.

Our purpose is to price a European call option in the model (\ref{ModelB}).
We limit further investigation to the valuation w.r.t.\ the minimal
martingale measure.

\begin{thm}
The EMM $\mathbb{Q}$ in the market defined by the model (\ref{ModelB})
is minimal iff the process $\nu$ corresponding to $\mathbb{Q}$ is
identically zero.
\end{thm}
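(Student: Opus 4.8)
The plan is to read off the semimartingale decomposition of the discounted price under $\mathbb{P}$ and then test the defining property of the minimal martingale measure against the representation of $L^2$-martingales in the Brownian filtration. From \eqref{Model0} with $\rho=0$ one gets, under $\mathbb{P}$, $dX_t=(\mu-r)X_t\,dt+\sigma(Y_t)X_t\,dB_t$, so that the local-martingale part of the discounted price $X_t=\operatorname{e}^{-rt}S_t$ is $M_t=\int_0^t\sigma(Y_s)X_s\,dB_s$ and its finite-variation part is $A_t=\int_0^t(\mu-r)X_s\,ds$. Since $\mathcal{F}_0$ is $\mathbb{P}$-trivial and every density \eqref{Lt} satisfies $L_0=1$, the condition ``$\mathbb{Q}=\mathbb{P}$ on $\mathcal{F}_0$'' holds automatically, and because $M$ is continuous, strict orthogonality of a continuous $\mathbb{P}$-martingale $N$ to $M$ amounts to $\langle N,M\rangle\equiv0$. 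Thus minimality of $\mathbb{Q}^\nu$ is equivalent to the statement that every square-integrable $\mathbb{P}$-martingale $N$ with $\langle N,M\rangle\equiv0$ is a local $\mathbb{Q}^\nu$-martingale.

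For the implication $\nu\equiv0\Rightarrow\mathbb{Q}^0$ minimal, I would take an arbitrary such $N$ and use the representation theorem for the filtration generated by the independent Wiener processes $B$ and $Z$ to write $N_t=N_0+\int_0^t\phi_s\,dB_s+\int_0^t\psi_s\,dZ_s$ with $\phi,\psi$ predictable and square-integrable on $[0,T]$. Then $0=\langle N,M\rangle_t=\int_0^t\phi_s\sigma(Y_s)X_s\,ds$, and since $\sigma\ge c>0$ and $X_s>0$ this forces $\phi\equiv0$ $dt\times d\mathbb{P}$-a.e.; hence $N_t=N_0+\int_0^t\psi_s\,dZ_s$. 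By \eqref{ModelB} with $\nu\equiv0$ we have $Z^{\mathbb{Q}}_t=Z_t$, i.e. $Z$ is a Wiener process under $\mathbb{Q}^0$, so $N$ is a local $\mathbb{Q}^0$-martingale and $\mathbb{Q}^0$ is the MMM.

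For the converse, assuming $\mathbb{Q}^\nu$ minimal, I would simply test the definition on $N_t:=Z_t$, $0\le t\le T$: this is a square-integrable $\mathbb{P}$-martingale with $\langle Z,M\rangle_t=\int_0^t\sigma(Y_s)X_s\,d\langle Z,B\rangle_s\equiv0$, hence strictly orthogonal to $M$, so minimality forces $Z$ to be a local $\mathbb{Q}^\nu$-martingale. On the other hand, $Z^{\mathbb{Q}}_t=Z_t+\int_0^t\nu_s\,ds$ is a $\mathbb{Q}^\nu$-Wiener process by \eqref{ModelB}, so $\int_0^\cdot\nu_s\,ds=Z^{\mathbb{Q}}-Z$ is a continuous local $\mathbb{Q}^\nu$-martingale of finite variation starting at $0$, hence identically zero; therefore $\nu_s=0$ for a.e. $s$, $\mathbb{P}$-a.s.

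The only genuinely delicate point is pinning down the correct meaning of ``strictly orthogonal'' and checking that it reduces to $\langle N,M\rangle\equiv0$ in this continuous setting; everything else is Girsanov's theorem together with the elementary fact that a continuous local martingale of finite variation is constant. As a cross-check one may instead invoke the F\"ollmer--Schweizer formula, which gives the MMM density $\mathcal{E}\bigl(-\int\lambda\,dM\bigr)$ with $\lambda$ determined by $A=\int\lambda\,d\langle M\rangle$, i.e. $\lambda_s=(\mu-r)/(\sigma^2(Y_s)X_s)$; then $-\int\lambda\,dM=\int(r-\mu)/\sigma(Y_s)\,dB_s$, so the resulting density coincides with \eqref{Lt} for $\nu\equiv0$, and uniqueness of the MMM closes the argument.
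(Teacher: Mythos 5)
Your argument is correct, and it is worth comparing with the paper's. For the direction ``$\nu\equiv 0\Rightarrow$ MMM'' you and the paper do essentially the same thing: kill the $B$-component of an orthogonal square-integrable martingale $N$ and observe that what remains stays a local martingale under the new measure. The differences are cosmetic but in your favour: you invoke the predictable representation theorem for the filtration generated by the independent Brownian motions $B$ and $Z$ (legitimate here, since $\mathcal{F}^{(B,W)}=\mathcal{F}^{(B,Z)}$), so there is no orthogonal remainder to carry around, whereas the paper uses a Kunita--Watanabe decomposition; you also test orthogonality against the true martingale part $M_t=\int_0^t\sigma(Y_s)X_s\,dB_s$ of the discounted price, which is more faithful to the definition of the MMM than the paper's $\int\sigma\,dB$; and you conclude via Girsanov ($Z$ remains a $\mathbb{Q}^0$-Wiener process when $\nu\equiv0$) instead of the paper's computation that $d(N_tL_t)$ has no drift, two equivalent routes. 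The converse is where you genuinely diverge: the paper simply appeals to uniqueness of the MMM (leaving implicit that equality of the measures $\mathbb{Q}^\nu=\mathbb{Q}^0$ forces $\nu\equiv0$ a.e.), while you test the defining property on $N=Z$ itself and deduce that $\int_0^\cdot\nu_s\,ds=Z^{\mathbb{Q}}-Z$ is a continuous finite-variation local $\mathbb{Q}^\nu$-martingale started at $0$, hence zero. Your version is more self-contained (it does not need the F\"ollmer--Schweizer uniqueness theorem) and it directly produces $\nu_s=0$ for a.e.\ $s$ almost surely, which is exactly the assertion; the paper's version is shorter but buries that last identification step. Your closing cross-check via the F\"ollmer--Schweizer density $\mathcal{E}(-\int\lambda\,dM)$ with $\lambda_s=(\mu-r)/(\sigma^2(Y_s)X_s)$ matching the density in \eqref{Lt} with $\nu\equiv0$ is also correct and makes the link between the two viewpoints explicit.
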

\begin{proof}
Suppose $\nu(t)=0$, $t \in[0, T]$. Let $B$ and $B^\mathbb{Q}$ be
$\mathcal{F}_t$-adapted Wiener processes w.r.t.\ measures $\mathbb{P}$
and $\mathbb{Q}$, respectively. If $N_t$ is a square-integrable $\mathbb
{P}$-martingale, then we can apply the Kunita--Watanabe decomposition
to derive
\begin{equation*}
N_t=N_0+\int_0^t
l_udB_u+\int_0^t
l^\mathbb{Q}_udB^\mathbb{Q}_u+Z_t,
\end{equation*}
where
\begin{equation*}
\langle B, Z \rangle=\bigl\langle B^\mathbb{Q}, Z \bigr\rangle= 0.
\end{equation*}

Let $N$ be strictly orthogonal to $\int_0^t \sigma(u)dB_u$. Then
\begin{equation*}
0=\Biggl\langle N, \int_0^{\cdot}
\sigma_udB_u \Biggr\rangle_t=\int
_0^tl_u\sigma_udu,
\end{equation*}
where $l_t=0$, $t \in[0, T]$, a.s.. Then for $L_t=d\mathbb{Q}/d\mathbb
{P}|_t$ and $\gamma_t:=(r-\mu)/\sigma_t$,
\begin{align*}
d(N_tL_t)& =  N_tdL_t+L_tdN_t+d\langle N, L \rangle_t \\
& =  N_tdL_t+L_tdN_t+\gamma_tl_tdt \\
& =  N_tdL_t+L_tdN_t.
\end{align*}

The process $N_tL_t$ is a local $\mathbb{P}$-martingale; hence, $N_t$
is a local $\mathbb{Q}$-martingale. By definition $\mathbb{Q}$ is MMM.

The converse statement of the theorem comes straightforward from the
uniqueness of MMM.
\end{proof}

The solution of the differential equation defining the evolution of the
price of asset has the following representation:
%
\begin{equation}
\label{S_t} S_t=S\exp\Biggl\{rt+\int_0^t
\sigma(Y_s)dB^{\mathbb{Q}}_s-\frac{1}{2}\int
_0^t \sigma^2(Y_s)ds
\Biggr\}, \quad 0\leq t \leq T.
\end{equation}

For a fixed trajectory of $Y_s$, the argument of the exponential
function in the right-hand side of \eqref{S_t} is a Gaussian process,
and $S_t$, $1\leq t\leq T$, has the log-normal distribution with $\ln
S_t\sim N (\ln S+(r-\frac{1}{2}\bar{\sigma}_t^2)t,\bar{\sigma
}_t^2t )$, where $\hat{\sigma}^2_t=\hat{\sigma}^2_t(Y_s):=\frac
{1}{t}\int_0^t\sigma^2(Y_s)ds$, $0 \leq t \leq T$. This fact is crucial
for the derivation of expression for the value of European call option.

The value of European call option at time 0 w.r.t. the MMM is defined
by the general formula
\begin{equation*}
V_0=\operatorname{e}^{-rT}E^{\mathbb{Q}}
\bigl(S^{\mathbb{Q}}_T-K\bigr)^{+}.
\end{equation*}

We apply the telescopic property of mathematical expectation to
transform the previous expression as follows:
%
\begin{equation}
\label{EV0} V_0=\operatorname{e}^{-rT}
\mathbb{E}^{\mathbb{Q}}\bigl\{\mathbb{E}^{\mathbb
{Q}}\bigl\{
\bigl(S^{\mathbb{Q}}_T-K\bigr)^{+}\big|Y_s, 0
\leq s\leq T\bigr\}\bigr\}.
\end{equation}

The inner expectation is conditional on the path of $Y_s,\, 0 \leq s
\leq T$, and therefore, it is actually the Black--Scholes price for a
model with deterministic time-dependent volatility. According to Lemma
2.1 in \cite{paper4}, the inner expectation in \eqref{EV0} has the
following representation:
\begin{align}
\notag &\mathbb{E}^{\mathbb{Q}}\bigl\{\bigl(S^{\mathbb{Q}}_T-K\bigr)^{+}|Y_s, 0 \leq s\leq T\bigr\}\\
\notag &\quad =\operatorname{e}^{\ln{S}+rT}\varPhi \biggl(\frac{\ln S+(r+\frac{1}{2}\bar{\sigma}^2_0)T-\ln K}{\bar{\sigma}_0\sqrt{T}}\biggr)\\
\label{IntExp} &\qquad -K\varPhi \biggl(\frac{\ln S+(r-\frac{1}{2}\bar{\sigma}^2_0)T-\ln K}{\bar{\sigma}_0\sqrt{T}} \biggr),
\end{align}
where $\bar{\sigma}_t:=\sqrt{\dfrac{1}{T}\int_t^T\sigma^2(Y_s)ds} \geq
0$, and $\varPhi$ is the standard normal distribution function. Notice
that $\bar{\sigma}^2_0(Y_s)=\hat{\sigma}^2_T(Y_s)$. The former notation
may be viewed as the volatility averaged from the current moment to
maturity, whereas the latter is the volatility averaged from the
initial moment to the current one.

Notice that the inner conditional expectation is an increasing function
of~$\bar{\sigma}^2_0$ (see Lemma~3.1 in \cite{paper4}), which is the
type of behavior one may expect to be exhibited by the Black--Scholes
price of European call option.

Taking into account the form of inner integral, in order to derive an
analytic expression for the price of an option $V_0$, it is necessary
to deal with expectation of $\varPhi$. Instead of trying to evaluate the
integral analytically, it is possible to use the Monte Carlo method.

\section{Derivation of analytic expression for the option price}\label{sec6}

From Eqs.\  \eqref{EV0}--\eqref{IntExp} we can see that in order to
derive the formula for the option price, it is necessary to present the
exact formula for the expectation of $\varPhi$. In this section, we apply
the inverse Fourier transform after rearranging of the right-hand side
of \eqref{IntExp}.

We introduce the following deterministic functions $\sigma_i=\sigma
_i(s)$, $i=\overline{1,4}$:
\begin{gather}
\label{sigma12} \sigma_{1,2}(s)=\frac{s}{\sqrt{T}}\mp\frac{\sqrt{s^2T-2T(\ln
{(S/K)}+rT)}}{T},
\\
\label{sigma34} \sigma_{3,4}(s)=\frac{-s}{\sqrt{T}}\mp\frac{\sqrt{s^2T+2T(\ln{(S/K)}+rT)}}{T}.
\end{gather}
We define the domains of each of these functions to guarantee the
nonnegativity of the expressions under square root, that is, $s^2T \geq
2T(\ln{(S/K)}+rT$ for $\sigma_1$, $\sigma_2$, and $s^2T \geq-2T(\ln
{(S/K)}+rT$ for $\sigma_3$, $\sigma_4$.

\begin{lemma}\label{EV0Int}
Suppose that the market is defined by the model (\ref{ModelB}) with
assumptions (B1)--(B3), $\mathbb{Q}$ is MMM, and $V_0$ is the
price of European call option at time 0. Then we have the
following representations:
\begin{itemize}
\item[\rm1)] for $\ln{(S/K)}+rT \geq0$ and $k =\sqrt{2(\ln{(S/K)}+rT)}$,
%
\begin{align}
\notag V_0&=S\operatorname{e}^{rT} \Biggl(\varPhi(k)+\frac{1}{\sqrt{2\pi}}\int_{k}^{\infty} \bigl(\mathbb{Q}\bigl(\bar{\sigma}_0<\sigma_1(s)\bigr)
 +\mathbb{Q}\bigl(\bar{\sigma}_0>\sigma_2(s)\bigr) \bigr)\operatorname {e}^{-s^2/2}ds \Biggr)\\
\notag &\quad -K \Biggl(\varPhi(0)+\frac{1}{\sqrt{2\pi}} \Biggl(\int_{0}^{\infty}\mathbb {Q}\bigl(\bar{\sigma}_0<\sigma_4(s)\bigr)\operatorname{e}^{-s^2/2}ds\\
&\quad -\int_{-\infty}^{0}\mathbb{Q}\bigl(\bar{\sigma}_0>\sigma_4(s)\bigr)\operatorname{e}^{-s^2/2}ds \Biggr) \Biggr);\label{EV0_PSimple1}
\end{align}
%
\item[\rm2)] for $\ln{(S/K)}+rT < 0$ and $l =\sqrt{-2(\ln{(S/K)}+rT)}$,
%
\begin{align}
\notag V_0&=S\operatorname{e}^{rT} \Biggl(\frac{1}{2}+\frac{1}{\sqrt{2\pi}}\Biggl(\int_{0}^{\infty}\mathbb{Q}\bigl(\bar{\sigma}_0>\sigma_2(s)\bigr)\operatorname{e}^{-s^2/2}ds\\
\notag &\quad -\int_{-\infty}^{0}\mathbb{Q}\bigl(\bar{\sigma}_0<\sigma_2(s)\bigr)\operatorname{e}^{-s^2/2}ds \Biggr) \Biggr)\\
\notag &\quad -K \Biggl(\varPhi(-l)-\frac{1}{\sqrt{2\pi}}\int_{-\infty}^{-l}\bigl(\mathbb {Q}\bigl(\bar{\sigma}_0<\sigma_3(s)\bigr)\\
&\quad +\mathbb{Q}\bigl(\bar{\sigma}_0>\sigma_4(s)\bigr)\bigr)\operatorname {e}^{-s^2/2}ds \Biggr).\label{EV0_PSimple2}
\end{align}
%
\end{itemize}
\end{lemma}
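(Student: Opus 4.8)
The plan is to reduce the statement to the computation of the two expectations $\mathbb{E}^{\mathbb{Q}}[\varPhi(d_1(\bar{\sigma}_0))]$ and $\mathbb{E}^{\mathbb{Q}}[\varPhi(d_2(\bar{\sigma}_0))]$, where I write $d_{1,2}(\sigma):=\dfrac{\ln(S/K)+rT}{\sigma\sqrt{T}}\pm\dfrac{\sigma\sqrt{T}}{2}$, so that the arguments of $\varPhi$ in \eqref{IntExp} are exactly $d_1(\bar{\sigma}_0)$ and $d_2(\bar{\sigma}_0)$. Substituting \eqref{IntExp} into \eqref{EV0} then writes $V_0$ as a fixed linear combination of these two expectations, with coefficients built from $S$, $K$ and the discount factor. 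Throughout I will use that $\bar{\sigma}_0\geq c>0$ a.s.\ by assumption (B2).

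The computational engine is the elementary identity $\varPhi(y)=\int_{\mathbb{R}}\mathbf{1}\{s<y\}\,\frac{1}{\sqrt{2\pi}}e^{-s^2/2}\,ds$ together with Fubini's theorem (applicable since the integrand is nonnegative and $\varPhi\le1$), which gives
\[
\mathbb{E}^{\mathbb{Q}}\bigl[\varPhi\bigl(d_i(\bar{\sigma}_0)\bigr)\bigr]=\frac{1}{\sqrt{2\pi}}\int_{\mathbb{R}}\mathbb{Q}\bigl(d_i(\bar{\sigma}_0)>s\bigr)\,e^{-s^2/2}\,ds,\qquad i=1,2.
\]
For each fixed $s$ I then solve $d_i(\sigma)>s$ over $\sigma>0$: clearing the denominator (legitimate since $\sigma\sqrt{T}>0$) turns this into the quadratic inequality $\frac{T}{2}\sigma^2-s\sqrt{T}\,\sigma+(\ln(S/K)+rT)>0$ for $i=1$, whose roots are $\sigma_1(s)\le\sigma_2(s)$ as in \eqref{sigma12}, and $\frac{T}{2}\sigma^2+s\sqrt{T}\,\sigma-(\ln(S/K)+rT)<0$ for $i=2$, whose roots are $\sigma_3(s)\le\sigma_4(s)$ as in \eqref{sigma34}. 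Hence $\{d_1(\sigma)>s,\ \sigma>0\}=\{\sigma<\sigma_1(s)\}\cup\{\sigma>\sigma_2(s)\}$ when the discriminant $s^2T-2T(\ln(S/K)+rT)$ is positive and equals all of $\{\sigma>0\}$ otherwise, while $\{d_2(\sigma)>s,\ \sigma>0\}=(\sigma_3(s),\sigma_4(s))$ when $s^2T+2T(\ln(S/K)+rT)>0$ and is empty otherwise.

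The remaining step — converting these descriptions into the four integrals of \eqref{EV0_PSimple1}--\eqref{EV0_PSimple2} — is where I expect the real work to be, since it is a case analysis rather than a single computation. I would split on the sign of $a:=\ln(S/K)+rT$ (this is precisely the dichotomy between parts 1) and 2), with $k=\sqrt{2a}$, resp.\ $l=\sqrt{-2a}$, the place where the governing discriminant changes sign) and, inside each case, on the sign of $s$. Two elementary observations make the terms collapse: (i) since $\bar{\sigma}_0>0$ a.s., the event $\{\bar{\sigma}_0<\sigma_j(s)\}$ is null and $\{\bar{\sigma}_0>\sigma_j(s)\}$ is a.s.\ certain whenever $\sigma_j(s)\le0$, and the signs of $\sigma_1(s),\sigma_2(s)$ (resp.\ $\sigma_3(s),\sigma_4(s)$) are read off from their product $2a/T$ (resp.\ $-2a/T$) and sum $2s/\sqrt{T}$ (resp.\ $-2s/\sqrt{T}$); and (ii) $\mathbb{Q}(\bar{\sigma}_0<x)=1-\mathbb{Q}(\bar{\sigma}_0>x)$ for all $x$ outside the at most countable set of values at which some $\sigma_j$ meets an atom of $\bar{\sigma}_0$, hence for almost every $s$, so it is immaterial whether the inequalities are strict. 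Collecting the $s$-ranges on which $\mathbb{Q}(d_i(\bar{\sigma}_0)>s)$ is identically $0$ or $1$ and integrating against $\frac{1}{\sqrt{2\pi}}e^{-s^2/2}$ produces the explicit terms $\varPhi(k)$ and $\varPhi(0)=\tfrac12$ in case 1) and $\varPhi(-l)$ and $\varPhi(0)=\tfrac12$ in case 2); what remains is exactly the stated integrals, and substitution back into $V_0$ finishes the proof. One useful sanity check along the way is that $\sigma_j(s)\to0$ as $|s|\to\infty$, so each integrand vanishes for $|s|$ large (once $\sigma_j(s)<c$) and all the integrals converge.
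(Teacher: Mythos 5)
Your proposal is correct and follows essentially the same route as the paper: write $V_0=S\operatorname{e}^{rT}\mathbb{E}^{\mathbb{Q}}\varPhi(d_1)-K\mathbb{E}^{\mathbb{Q}}\varPhi(d_2)$, represent $\varPhi$ as a Gaussian integral and apply Fubini, translate $\mathbb{Q}(d_i>s)$ into quadratic inequalities in $\bar{\sigma}_0$ with roots $\sigma_1,\dots,\sigma_4$, and then do the case analysis on the sign of $\ln(S/K)+rT$ and of the discriminant, using $\bar{\sigma}_0>0$ to discard the trivial probabilities. The only cosmetic difference is that the paper splits the Gaussian integral at $0$ from the outset (which is why the $\varPhi(0)=\tfrac12$ terms appear via complementing $\mathbb{Q}(\bar{\sigma}_0<\sigma_j(s))$ on $s<0$ rather than from ranges where the probability is identically $0$ or $1$), a rearrangement your complementation identity already covers.
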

\begin{proof}
From \eqref{EV0} and \eqref{IntExp} we have:
\begin{equation*}
V_0=S\operatorname{e}^{rT}\mathbb{E}^{\mathbb{Q}}\bigl(
\varPhi(d_1)\bigr)-K\mathbb {E}^{\mathbb{Q}}\bigl(
\varPhi(d_2)\bigr),
\end{equation*}
where $d_1$ and $d_2$ are defined as follows:
%
\begin{equation}
\label{d1d2} d_1=\frac{\ln S+(r+\frac{1}{2}\bar{\sigma}^2_0)T-\ln K}{\bar{\sigma}_0\sqrt{T}}, \qquad
d_2=d_1-\bar{\sigma}_0\sqrt{T}.
\end{equation}
Since
%
%
\begin{align}
\notag\varPhi(d_1)&=\frac{1}{\sqrt{2\pi}}\int_{-\infty}^{d_1}\operatorname {e}^{-s^2/2}ds\\
&=\frac{1}{2}+\frac{1}{\sqrt{2\pi}}\operatorname{I}_{\{d_1>0\}}\int_{0}^{d_1}\operatorname{e}^{-s^2/2}ds -\frac{1}{\sqrt{2\pi}}\operatorname{I}_{\{d_1<0\}}\int_{d_1}^{0}\operatorname{e}^{-s^2/2}ds,
\end{align}
%
we have
\begin{equation*}
\begin{aligned}
\mathbb{E}^{\mathbb{Q}}\bigl(\varPhi(d_1)\bigr) &=\frac{1}{2}+\frac{1}{\sqrt{2\pi}}\int_{0}^{\infty}\mathbb {Q}(s<d_1)\operatorname{e}^{-s^2/2}ds\\
&\quad -\frac{1}{\sqrt{2\pi}}\int_{-\infty}^{0}\mathbb{Q}(s>d_1)\operatorname {e}^{-s^2/2}ds.
\end{aligned}
\end{equation*}

The probabilities in the integrands may be represented as follows:
\begin{equation*}
\begin{gathered} \mathbb{Q}(s<d_1)=\mathbb{Q} \biggl(
\frac{1}{2}\bar{\sigma}^2_0T-s\bar {
\sigma}_0\sqrt{T}+\ln{(S/K)}+rT>0 \biggr),
\\
\mathbb{Q}(s>d_1)=\mathbb{Q} \biggl(
\frac{1}{2}\bar{\sigma}^2_0T-s\bar {
\sigma}_0\sqrt{T}+\ln{(S/K)}+rT<0 \biggr).
\\
\end{gathered} %
\end{equation*}

Similarly, for $\varPhi(d_2)$, we have
\begin{equation*}
\begin{aligned}
\varPhi(d_2)&=\frac{1}{\sqrt{2\pi}}\int_{-\infty}^{d_2}\operatorname {e}^{-s^2/2}ds\\
&=\frac{1}{2}+\frac{1}{\sqrt{2\pi}}\operatorname{I}_{\{d_2>0\}}\int_{0}^{d_2}\operatorname{e}^{-s^2/2}ds -\frac{1}{\sqrt{2\pi}}\operatorname{I}_{\{d_2<0\}}\int_{d_2}^{0}\operatorname{e}^{-s^2/2}ds.
\end{aligned} %
\end{equation*}

Hence,
\begin{equation*}
\begin{aligned}
\mathbb{E}^{\mathbb{Q}}\bigl(\varPhi(d_2)\bigr) &=\frac{1}{2}+\frac{1}{\sqrt{2\pi}}\int_{0}^{\infty}\mathbb {Q}(s<d_2)\operatorname{e}^{-s^2/2}ds\\
&\quad -\frac{1}{\sqrt{2\pi}}\int_{-\infty}^{0}\mathbb{Q}(s>d_2)\operatorname {e}^{-s^2/2}ds.
\end{aligned}
\end{equation*}

The probabilities from the integrands may be represented as follows:
\begin{equation*}
\begin{gathered} \mathbb{Q}(s<d_2)=\mathbb{Q} \biggl(
\frac{1}{2}\bar{\sigma}^2_0T+s\bar {
\sigma}_0\sqrt{T}-\ln{(S/K)}-rT<0 \biggr),
\\
\mathbb{Q}(s>d_2)=\mathbb{Q} \biggl(
\frac{1}{2}\bar{\sigma}^2_0T+s\bar {
\sigma}_0\sqrt{T}-\ln{(S/K)}-rT>0 \biggr).
\\
\end{gathered} %
\end{equation*}

Solutions of the quadratic equations, which correspond to the above
quadratic inequalities, do not necessarily exist; therefore, we
consider different cases:
\begin{itemize}

\item[1)] The discriminant $D_{12}:=s^2T-2T(\ln{(S/K)}+rT)$ is a
quadratic form w.r.t.~$s$. There are two possibilities:

\begin{itemize}
\item[1.1)] $\ln{(S/K)}+rT>0$. Then for $k=\sqrt{2(\ln{(S/K)}+rT)}$:
$D_{12}<0$, $s\in(-k;k)$; $D_{12}>0,\ s \in(-\infty;-k) \cup(k;\infty
)$; so
\begin{equation*}
\begin{aligned}
\mathbb{E}^{\mathbb{Q}}\bigl(\varPhi(d_1)\bigr) &=\frac{1}{2}+\frac{1}{\sqrt{2\pi}} \Biggl(\int_{0}^{k}\operatorname {e}^{-s^2/2}ds+\int_{k}^{\infty}\bigl(\mathbb{Q}\bigl(\bar{\sigma}_0<\sigma _1(s)\bigr)\\
&\quad +\mathbb{Q}\bigl(\bar{\sigma}_0>\sigma_2(s)\bigr) \bigr)\operatorname {e}^{-s^2/2}ds \Biggr)\\
&\quad -\frac{1}{\sqrt{2\pi}}\int_{-\infty}^{-k} \bigl(\mathbb{Q}\bigl(\bar{\sigma }_0\,{<}\,\sigma_2(s)\bigr)\,{-}\,\mathbb{Q}\bigl(\bar{\sigma}_0\,{<}\,\sigma_1(s)\bigr) \bigr)\operatorname{e}^{-s^2/2}ds.
\end{aligned} %
\end{equation*}

\item[1.2)] $\ln{(S/K)}+rT \leq0$. Then for any $s \in(-\infty;
\infty), \ D_{12}>0$, So
\begin{equation*}
\begin{aligned}
\mathbb{E}^{\mathbb{Q}}\bigl(\varPhi(d_1)\bigr) &=\frac{1}{2}+\frac{1}{\sqrt{2\pi}}\int_{0}^{\infty}\bigl(\mathbb{Q}\bigl(\bar {\sigma}_0<\sigma_1(s)\bigr)\\
&\quad +\mathbb{Q}\bigl(\bar{\sigma}_0>\sigma_2(s)\bigr) \bigr)\operatorname {e}^{-s^2/2}ds\\
&\quad -\frac{1}{\sqrt{2\pi}}\int_{-\infty}^{0} \bigl(\mathbb{Q}\bigl(\bar{\sigma }_0\,{<}\,\sigma_2(s)\bigr)\,{-}\,\mathbb{Q}\bigl(\bar{\sigma}_0\,{<}\,\sigma_1(s)\bigr) \bigr)\operatorname{e}^{-s^2/2}ds.
\end{aligned} %
\end{equation*}
\end{itemize}

\item[2)] The discriminant $D_{34}:=s^2T+2T(\ln{(S/K)}+rT)$ is a
quadratic form w.r.t.~$s$. There are two possibilities:
\begin{itemize}

\item[2.1)] $\ln{(S/K)}+rT<0$. Then for $l=\sqrt{-2(\ln{(S/K)}+rT)}$,
$D_{34}<0$, $s\in(-l;l)$; $D_{34}>0,\ s \in(-\infty;-l) \cup(l;\infty
)$. So
\begin{equation*}
\begin{aligned}
\mathbb{E}^{\mathbb{Q}}\bigl(\varPhi(d_2)\bigr) &=\frac{1}{2}+\frac{1}{\sqrt{2\pi}}\int_{l}^{\infty}\bigl(\mathbb{Q}\bigl(\bar {\sigma}_0<\sigma_4(s)\bigr)\\
&\quad -\mathbb{Q}\bigl(\bar{\sigma}_0<\sigma_3(s)\bigr)\bigr)\operatorname{e}^{-s^2/2}ds -\frac{1}{\sqrt{2\pi}} \Biggl(\int_{-l}^{0}\operatorname{e}^{-s^2/2}ds\\
&\quad +\int_{-\infty}^{-l} \bigl(\mathbb{Q}\bigl(\bar{\sigma}_0<\sigma_3(s)\bigr)+\mathbb {Q}\bigl(\bar{\sigma}_0>\sigma_4(s)\bigr) \bigr)\operatorname{e}^{-s^2/2}ds\Biggr).
\end{aligned} %
\end{equation*}

\item[2.2)] $\ln{(S/K)}+rT \geq0$. Then for any $s \in(-\infty;
\infty), \ D_{34}>0$. So
\begin{equation*}
\begin{aligned}
\mathbb{E}^{\mathbb{Q}}\bigl(\varPhi(d_2)\bigr) &=\frac{1}{2}+\frac{1}{\sqrt{2\pi}}\int_{0}^{\infty}\bigl(\mathbb{Q}\bigl(\bar {\sigma}_0<\sigma_4(s)\bigr)\\
&\quad -\mathbb{Q}\bigl(\bar{\sigma}_0<\sigma_3(s)\bigr)\bigr)\operatorname {e}^{-s^2/2}ds\\
&\quad -\frac{1}{\sqrt{2\pi}}\int_{-\infty}^{0} \bigl(\mathbb{Q}\bigl(\bar{\sigma }_0\,{<}\,\sigma_3(s)\bigr)\,{+}\,\mathbb{Q}\bigl(\bar{\sigma}_0\,{>}\,\sigma_4(s)\bigr) \bigr)\operatorname{e}^{-s^2/2}ds.
\end{aligned} %
\end{equation*}
\end{itemize}
\end{itemize}

Combining these cases, we get the following expressions for the option price:
\begin{itemize}
\item[1)] for $\ln{(S/K)}+rT \geq0$,
%
\begin{align}
\notag V_0&=S\operatorname{e}^{rT} \Biggl(\varPhi(k)+\frac{1}{\sqrt{2\pi}} \Biggl(\int_{k}^{\infty} \bigl(\mathbb{Q}\bigl(\bar{\sigma}_0<\sigma_1(s)\bigr)\\
\notag &\quad +\mathbb{Q}\bigl(\bar{\sigma}_0>\sigma_2(s)\bigr) \bigr)\operatorname {e}^{-s^2/2}ds\\
\notag &\quad -\int_{-\infty}^{-k} \bigl(\mathbb{Q}\bigl(\bar{\sigma}_0<\sigma_2(s)\bigr)-\mathbb {Q}\bigl(\bar{\sigma}_0<\sigma_1(s)\bigr) \bigr)\operatorname{e}^{-s^2/2}ds\Biggr) \Biggr)\\
\notag &\quad -K \Biggl(\frac{1}{2}+\frac{1}{\sqrt{2\pi}} \Biggl(\int_{0}^{\infty}\bigl(\mathbb{Q}\bigl(\bar{\sigma}_0<\sigma_4(s)\bigr) -\mathbb{Q}\bigl(\bar{\sigma}_0<\sigma_3(s)\bigr) \bigr)\operatorname {e}^{-s^2/2}ds\\
&\quad -\int_{-\infty}^{0} \bigl(\mathbb{Q}\bigl(\bar{\sigma}_0<\sigma_3(s)\bigr)+\mathbb {Q}\bigl(\bar{\sigma}_0>\sigma_4(s)\bigr) \bigr)\operatorname{e}^{-s^2/2}ds\Biggr) \Biggr);\label{EV0_P1}
\end{align}

\item[2)] for $\ln{(S/K)}+rT < 0$,
%
\begin{align}
\notag V_0&=S\operatorname{e}^{rT} \Biggl(\frac{1}{2}+\frac{1}{\sqrt{2\pi}}\Biggl(\int_{0}^{\infty} \bigl(\mathbb{Q}\bigl(\bar{\sigma}_0<\sigma_1(s)\bigr)\\[-1.5pt]
\notag &\quad +\mathbb{Q}\bigl(\bar{\sigma}_0>\sigma_2(s)\bigr) \bigr)\operatorname {e}^{-s^2/2}ds\\[-1.5pt]
\notag &\quad -\int_{-\infty}^{0} \bigl(\mathbb{Q}\bigl(\bar{\sigma}_0<\sigma_2(s)\bigr)-\mathbb {Q}\bigl(\bar{\sigma}_0<\sigma_1(s)\bigr) \bigr)\operatorname{e}^{-s^2/2}ds\Biggr) \Biggr)\\[-1.5pt]
\notag &\quad -K \Biggl(\varPhi(-l)+\frac{1}{\sqrt{2\pi}} \Biggl(\int_{l}^{\infty}\bigl(\mathbb {Q}\bigl(\bar{\sigma}_0<\sigma_4(s)\bigr)\\[-1.5pt]
\notag &\quad -\mathbb{Q}\bigl(\bar{\sigma}_0<\sigma_3(s)\bigr)\bigr)\operatorname {e}^{-s^2/2}ds\\[-1.5pt]
&\quad -\int_{-\infty}^{-l} \bigl(\mathbb{Q}\bigl(\bar{\sigma}_0<\sigma_3(s)\bigr)+\mathbb {Q}\bigl(\bar{\sigma}_0>\sigma_4(s)\bigr) \bigr)\operatorname{e}^{-s^2/2}ds\Biggr) \Biggr).\label{EV0_P2}
\end{align}
%
\end{itemize}
%

Recalling that $\bar{\sigma}_0 \geq0$ and noticing that some of the
probabilities presented are identically zero, we simplify \eqref
{EV0_P1} and \eqref{EV0_P2} to the forms \eqref{EV0_PSimple1} and \eqref
{EV0_PSimple2}, respectively.\looseness=1
\end{proof}

Let $S_i \subset\mathds{R}$ be the domains of positivity of functions
$\sigma_i(s)$, $i=\overline{1,4}$. It is easy to check that the
functions appearing in the integrals \eqref{EV0_PSimple1}--\eqref
{EV0_PSimple2} are positive on the integration domains.

Assume that the probability density function of $\bar{\sigma}^2_0$ is
piecewise continuous on $\mathds{R}$. Then due to the Fourier inversion
theorem, for almost all $s \in S_i$, the probabilities in the
integrands in \eqref{EV0_PSimple1}--\eqref{EV0_PSimple2} have the
following representation:
%
%
\begin{align}
\notag\mathbb{Q}\bigl(\bar{\sigma}_0<\sigma_i(s)\bigr)&=\mathbb{Q}\bigl(\bar{\sigma }^2_0<\sigma^2_i(s)\bigr)\\
&=\lim_{\substack{\varepsilon\to0}} \frac{1}{2 \pi} \int_{-\infty}^{\sigma^2_i(s)}\Biggl(\int_{-\infty}^{\infty} \operatorname{exp} \biggl(iyu-\frac{\varepsilon^2 u^2}{2} \biggr) \phi(u)du \Biggr)dy,
\end{align}
%
where $\phi(u)=\mathbb{E}^{\mathbb{Q}}(\operatorname{e}^{iu \bar{\sigma
}^2_0})$ is the characteristic function of $\bar{\sigma}^2_0$.

We are now in a position to state the main result of this section.

\begin{thm}\label{OptionPriceAnRep}
Suppose that the market is defined by the model (\ref{ModelB}) with
assumptions (B1)--(B3), $\mathbb{Q}$ is the MMM, and $V_0$ is
the price at time 0 of European call option. Let the probability
density function of $\bar{\sigma}^2_0$ be piecewise continuous on
$\mathds{R}$. Then we have the following representations:
\begin{itemize}
\item[1)] for $\ln{(S/K)}+rT \geq0$ and $k =\sqrt{2(\ln{(S/K)}+rT)}$,
%
%
\begin{align*}
V_0=&\lim_{\substack{\varepsilon\to0}} \Biggl(S\operatorname{e}^{rT} \Biggl(\varPhi(k)+\frac{1}{(2\pi)^{3/2}}  \\
&\times\Biggl(\int_{k}^{\infty}\Biggl(\int_{-\infty}^{\sigma^2_1(s)}\int_{-\infty}^{\infty} \operatorname{exp} \biggl(iyu-\frac{\varepsilon^2 u^2}{2} \biggr) \phi(u)dudy\\
&+\int_{\sigma^2_2(s)}^{\infty}\int_{-\infty}^{\infty} \operatorname {exp} \biggl(iyu-\frac{\varepsilon^2 u^2}{2} \biggr) \phi(u)dudy \Biggr)\operatorname{e}^{-s^2/2}ds\Biggr) \Biggr)\\
&-K \Biggl(\frac{1}{2}+\frac{1}{(2\pi)^{3/2}} \\
&\times \Biggl(\int_{0}^{\infty}\int_{-\infty}^{\sigma^2_4(s)}\int_{-\infty}^{\infty}\operatorname {exp}\biggl(iyu-\frac{\varepsilon^2 u^2}{2} \biggr) \phi(u)dudy\operatorname {e}^{-s^2/2}ds\\
&+\int_{-\infty}^{0}\int_{\sigma^2_4(s)}^{\infty}\int_{-\infty}^{\infty} \operatorname{exp} \biggl(iyu-\frac{\varepsilon^2 u^2}{2} \biggr) \phi (u)dudy\operatorname{e}^{-s^2/2}ds \Biggr)\Biggr) \Biggr);
\end{align*}
%
\item[2)] for $\ln{(S/K)}+rT < 0$ and $l
=\sqrt{-2(\ln{(S/K)}+rT)}$,
%
%
\begin{align*}
V_0=&\lim_{\substack{\varepsilon\to0}} \Biggl(S\operatorname{e}^{rT} \Biggl(\frac{1}{2}+\frac{1}{(2\pi)^{3/2}}\\
&\times\Biggl(\int_{0}^{\infty}\int_{\sigma^2_2(s)}^{\infty}\int_{-\infty}^{\infty} \operatorname{exp} \biggl(iyu-\frac{\varepsilon^2 u^2}{2} \biggr) \phi(u)dudy\operatorname {e}^{-s^2/2}ds\\
&-\int_{-\infty}^{0}\int_{-\infty}^{\sigma^2_2(s)}\int_{-\infty}^{\infty} \operatorname{exp} \biggl(iyu-\frac{\varepsilon^2 u^2}{2} \biggr) \phi (u)dudy \operatorname{e}^{-s^2/2}ds \Biggr)\Biggr)\\
&-K \Biggl(\varPhi(-l)\,{-}\,\frac{1}{(2\pi)^{3/2}}\int_{-\infty}^{-l}\Biggl(\int_{-\infty}^{\sigma^2_3(s)} \int_{-\infty}^{\infty}\operatorname {exp} \biggl(iyu\,{-}\,\frac{\varepsilon^2 u^2}{2} \biggr) \phi(u)dudy\\
&+\int_{\sigma^2_4(s)}^{\infty} \int_{-\infty}^{\infty}\operatorname {exp} \biggl(iyu-\frac{\varepsilon^2 u^2}{2} \biggr) \phi(u)dudy \Biggr)\operatorname{e}^{-s^2/2}ds \Biggr) \Biggr),
\end{align*}
%
\end{itemize}
where
$\phi(u)=\mathbb{E}^{\mathbb{Q}}(\operatorname{e}^{iu \bar{\sigma
}^2_0})$ is the characteristic function of the random variable $\bar
{\sigma}^2_0, $ and
$\sigma_i=\sigma_i(s)$, $i=\overline{1,4}$, are of the form (\ref{sigma12})--(\ref{sigma34}).
\end{thm}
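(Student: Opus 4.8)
The plan is a direct substitution followed by a single limit interchange: start from the two closed forms \eqref{EV0_PSimple1}--\eqref{EV0_PSimple2} of Lemma~\ref{EV0Int}, replace each probability $\mathbb{Q}(\bar{\sigma}_0<\sigma_i(s))$ or $\mathbb{Q}(\bar{\sigma}_0>\sigma_i(s))$ occurring there by the Fourier-inversion expression displayed just above the theorem, and then pull $\lim_{\varepsilon\to0}$ out through the integration in $s$.

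First I would record the elementary reductions. By the remark following Lemma~\ref{EV0Int}, on every integration domain appearing in \eqref{EV0_PSimple1}--\eqref{EV0_PSimple2} the relevant function $\sigma_i(s)$ is strictly positive; since $\bar{\sigma}_0\ge0$ and $\bar{\sigma}_0^2$ has a piecewise continuous density, hence no atoms, this yields $\mathbb{Q}(\bar{\sigma}_0<\sigma_i(s))=\mathbb{Q}(\bar{\sigma}_0^2<\sigma_i^2(s))$ and $\mathbb{Q}(\bar{\sigma}_0>\sigma_i(s))=\mathbb{Q}(\bar{\sigma}_0^2>\sigma_i^2(s))=1-\mathbb{Q}(\bar{\sigma}_0^2<\sigma_i^2(s))$. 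For fixed $\varepsilon>0$ set $G_\varepsilon(c):=\frac{1}{2\pi}\int_{-\infty}^c\int_{-\infty}^{\infty}\operatorname{e}^{iyu-\varepsilon^2u^2/2}\phi(u)\,du\,dy$; since $u\mapsto\phi(u)\operatorname{e}^{-\varepsilon^2u^2/2}$ is the absolutely integrable characteristic function obtained from $\bar{\sigma}_0^2$ by Gaussian convolution, the inner double integral is $2\pi$ times the value of a genuine distribution function, so $0\le G_\varepsilon\le1$, and the complementary double integral over $(\sigma_i^2(s),\infty)\times\mathbb{R}$ equals $2\pi\bigl(1-G_\varepsilon(\sigma_i^2(s))\bigr)$; these are precisely the two triple integrals appearing in the statement. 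The displayed Fourier-inversion formula preceding the theorem gives $G_\varepsilon(\sigma_i^2(s))\to\mathbb{Q}(\bar{\sigma}_0^2<\sigma_i^2(s))$ as $\varepsilon\to0$ for almost every $s$ (the exceptional values of $s$ being those for which $\sigma_i^2(s)$ meets the countable set of discontinuities of the density, a Lebesgue-null set because $\sigma_i$ is continuous).

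To conclude, I would substitute $G_\varepsilon(\sigma_i^2(s))$ and $1-G_\varepsilon(\sigma_i^2(s))$ for $\mathbb{Q}(\bar{\sigma}_0<\sigma_i(s))$ and $\mathbb{Q}(\bar{\sigma}_0>\sigma_i(s))$ in \eqref{EV0_PSimple1}--\eqref{EV0_PSimple2}, producing for each $\varepsilon>0$ an expression $V_0^{(\varepsilon)}$ of the asserted shape (with $\varPhi(0)=\frac{1}{2}$ and with the two factors $\frac{1}{\sqrt{2\pi}}$ and $\frac{1}{2\pi}$ combining to $\frac{1}{(2\pi)^{3/2}}$). Every integrand in $V_0^{(\varepsilon)}$ is bounded in absolute value by $C\operatorname{e}^{-s^2/2}$ uniformly in $\varepsilon$ (this is where $0\le G_\varepsilon\le1$ is used), and $\operatorname{e}^{-s^2/2}$ is integrable over each relevant half-line, so the dominated convergence theorem gives $V_0^{(\varepsilon)}\to$ the right-hand side of \eqref{EV0_PSimple1} or \eqref{EV0_PSimple2}, which equals $V_0$ by Lemma~\ref{EV0Int}; hence $V_0=\lim_{\varepsilon\to0}V_0^{(\varepsilon)}$, as claimed. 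The one genuinely delicate step is this interchange of $\lim_{\varepsilon\to0}$ with the $s$-integration: it rests on the uniform-in-$\varepsilon$ bound $0\le G_\varepsilon\le1$, which comes from reading the inner double integral as the distribution function of a Gaussian-smoothed version of $\bar{\sigma}_0^2$, together with the almost-everywhere validity of Fourier inversion at the continuity points of a piecewise continuous density. Everything else is the sign-case bookkeeping already carried out in the proof of Lemma~\ref{EV0Int}.
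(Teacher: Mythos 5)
Your proposal is correct and follows essentially the same route as the paper: the paper obtains the theorem by substituting the displayed Fourier-inversion representation of $\mathbb{Q}(\bar{\sigma}_0<\sigma_i(s))$ (and its complement) into the formulas of Lemma~\ref{EV0Int}, exactly as you do. Your added dominated-convergence justification, via the uniform bound $0\le G_\varepsilon\le 1$ coming from reading the inner double integral as $2\pi$ times a smoothed distribution function, makes explicit the interchange of $\lim_{\varepsilon\to0}$ with the $s$-integration that the paper performs without comment.
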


\begin{remark}
If $\bar{\sigma}_0 \in L^2(\mathds{R})$, then the limit in Theorem~\ref
{OptionPriceAnRep} may be moved inside the integrals. Thus, $\epsilon$
may be equated to zero, and the expression for the option price is simplified.
\end{remark}

\begin{remark}
Under the assumption that $\sigma$ is bounded, we can rewrite the
analytical expression in terms of moments of $\bar{\sigma}^2_0$.
\end{remark}

Indeed, in this case, $\bar{\sigma}^2_0$ is bounded as well, so the
characteristic function $\phi(u)$ admits the Taylor series expansion
around zero:
%
\begin{equation}
\phi(u)=1+\sum_{j=1}^{\infty}
\frac{i^j u^j}{j!}m_j,
\end{equation}
where $m_n$ is the $n$th moment of the random variable $\bar{\sigma
}^2_0$, and $i=\sqrt{-1}$.

The moments of the random variable $\bar{\sigma}^2_0$ can be
represented by applying the fact that the finite-dimensional
distributions of the Ornstein--Uhlenbeck process are Gaussian vectors.
Bearing in mind that the covariance matrix of the process $Y_s$ is
nondegenerate and consists of the elements of the form
%
\begin{align}
\bigl(\varSigma^{i,l}\bigr)_{i,l=1}^j&=\frac{k^2}{2\alpha} \exp{ \bigl({-\alpha(t_{i}+t_{l})} \bigr)}\bigl(\exp\bigl(2\alpha\min(t_{i},t_{l})\bigr)-1\bigr),\label{CovMatr}
\end{align}
%
we get the following representation for the moments of the random
variable~$\bar{\sigma}^2_0$:
%
%
\begin{align}
m_j&=\frac{1}{T^j}\int^{T}_{0}\ldots\int^{T}_{0}\int_{\mathds{R}^j}\frac{\sigma^2(y_{1})\ldots\sigma^2(y_{j})}{(2\pi)^{j/2} \vert\varSigma\vert^{1/2}}\operatorname{e}^{-\frac{1}{2}(\mathbf{y} - \mathbf{\mu})^{\top}\varSigma^{-1} (\mathbf{y} - \mathbf{\mu})}\mathbf{dy} dt_{1}\ldots dt_{j},
\end{align}
%
where
\[\mathbf{y}=(y_1,\dots,y_j),\qquad \mathbf{dy}=dy_1\times\cdots\times
dy_j,\qquad \mathbf{\mu}= (Y_0\operatorname{e}^{-\alpha y_1}, \dots,
Y_0\operatorname{e}^{-\alpha y_j} ).\]

We have demonstrated that there is an analytic solution to the problem
of pricing of European call option in the model. However, the resulting
formula is complicated and cumbersome.
Therefore, our further investigation will be aimed at comparison of
numeric results produced by it with approximate calculations and
possible simplifications.

\bibliographystyle{vmsta-mathphys}
%

\end{document}